\newtheorem{theorem}{Theorem}
\newtheorem{lemma}{Lemma}
\def\BibTeX{{\rm B\kern-.05em{\sc i\kern-.025em b}\kern-.08em
    T\kern-.1667em\lower.7ex\hbox{E}\kern-.125emX}}
\newcommand{\scap}{\textsc{Escape}\xspace}
\begin{document}


\title{\scap to Precaution against Leader Failures}

\author{\IEEEauthorblockN{Gengrui Zhang}
\IEEEauthorblockA{\textit{Department of Electrical \& Computer Engineering}\\
\textit{University of Toronto}\\
Toronto, Canada \\
gengrui.zhang@mail.utorotno.ca}
\and
\IEEEauthorblockN{Hans-Arno Jacobsen}
\IEEEauthorblockA{\textit{Department of Electrical \& Computer Engineering}\\
\textit{University of Toronto}\\
Toronto, Canada \\
gengrui.zhang@mail.utorotno.ca}
}

\maketitle

\begin{abstract}
Leader-based consensus protocols must undergo a view-change phase to elect a new leader when the current leader fails. The new leader is often decided upon a candidate server that collects votes from a quorum of servers. However, voting-based election mechanisms intrinsically cause competition in leadership candidacy when each candidate collects only partial votes. This split-vote scenario can result in no leadership winner and prolong the undesired view-change period. In this paper, we investigate a case study of Raft's leader election mechanism and propose a new leader election protocol, called \scap, that fundamentally solves split votes by prioritizing servers based on their log responsiveness. \scap dynamically assigns servers with a configuration that offers different priorities through Raft's periodic heartbeat. In each assignment, \scap keeps track of server log responsiveness and assigns configurations that are inclined to win an election to more up-to-date servers, thereby preparing a pool of prioritized candidates. Consequently, when the next election takes place, the candidate with the highest priority will defeat its counterparts and becomes the next leader without competition. The evaluation results show that \scap progressively reduces the leader election time when the cluster scales up, and the improvement becomes more significant under message loss.
\end{abstract}

\begin{IEEEkeywords}
consensus algorithms, fault tolerance, leader election
\end{IEEEkeywords}

\section{Introduction}
Leader-based consensus algorithms have been widely deployed in practical systems, such as  HDFS~\cite{shvachko2010hadoop}, RAMCloud~\cite{ousterhout2010case}, Chubby~\cite{burrows2006chubby}, and ZooKeeper~\cite{hunt2010zookeeper}, and extensively studied in academia, such as Paxos~\cite{lamport1998part}, Viewstamped replication~\cite{oki1988viewstamped}, and Raft~\cite{ongaro2014search}. Under normal operation, they utilize a designated server as a leader to efficiently conduct consensus that satisfies state machine replication properties. However, leader-based algorithms are vulnerable to single points of failures. When the leader fails, the consensus process cannot proceed, and the system must select a new leader server through a view-change/leader election phase.

The view-change phase is undesired for providing high available services because no consensus can be reached without the coordination by the leader. Unfortunately, besides server crashes~\cite{sahoo2004failure, servercrash}, leader failures often take place because of various reasons, such as hardware failures~\cite{hardwarefails}, storage failures~\cite{bowers2011tell}, and, most commonly, network failures~\cite{gill2011understanding}. These issues may result in frequent leadership changes in systems that use timeouts to detect leader failures~\cite{ongaro2014search, hunt2010zookeeper, redisLeaderElection}. Before a new leader is elected, the system becomes temporarily unavailable and endures an out-of-service (OTS) period. For large-scale online applications, even seconds of OTS time is detrimental to user experiences and the quality of service~\cite{googleServicesOutages, amazonServicesOutages}. Therefore, the completion time of leader election significantly affects the performance of consensus services.

Voting-based mechanisms have been often chosen to implement leader election by selecting a server that can collect votes from a quorum (e.g., simple majority). A typical example is Raft's leader election mechanism~\cite{ongaro2014search}. Under normal operation, Raft uses a strong leadership for log replication where its leader dominates the consensus process. The strong leadership forces servers (operating as \emph{followers}) to only passively respond to the leader. Due to its modularity and simplicity, Raft quickly gained popularity and has been widely deployed~\cite{baiduRaft, sakic2017response, androulaki2018hyperledger, brown2016corda, zhang2021prosecutor}. However, Raft's leader election intrinsically creates competition among leadership candidates, which may result in split votes that needlessly prolong the leader election time. 

In Raft, each follower starts a timer when joining the system and resets the timer upon receiving a heartbeat from the leader. To maintain its leadership, the leader sends periodic heartbeats to followers. If a follower triggers a timeout, it transitions to a \emph{candidate} state and begins a leader election campaign to solicit votes. If the candidate successfully collects votes from a majority of servers, it becomes the new leader. However, Raft does not discriminate candidates as long as they are up-to-date. When multiple candidates start their election campaigns simultaneously and each of them collects only partial votes, no candidate can be elected by collecting votes from a majority because votes are split. The system has to repeat the election process until a candidate collects sufficient votes and becomes the new leader.

A simple solution to mitigate split votes is to use randomized timeouts; the randomness can reduce the probability of concurrent candidates that simultaneously initiate new leader election campaigns. Raft's evaluation shows the reduction of competing candidates when adding randomness to timeouts in a $5$-server cluster. However, this method does not fundamentally solve the problem of split votes and may become ineffective and inefficient when servers are not fully synchronized, especially at a large scale. 
To avoid concurrent election campaigns, the amount of randomness of election timeouts needs to be significantly increased when a system has more servers; the detection time of leader failures subsequently increases. In addition, stale servers (not fully synchronized) can interrupt election campaigns of up-to-date servers. When the timers of stale servers have a small initial timeout, they may often trigger timeouts before up-to-date servers, which may defer the appearance of a correct leader.



To address the above problem, in this paper, we propose a new leader election protocol, namely \scap, that takes precautions against leader failures by preparing a pool of ``future leaders'' before potential election campaigns take place. When the current leader fails, \scap is able to elect a new leader by one election campaign, escaping from leadership competition that causes split votes. 

\scap consists of two key components: the stochastic configuration assignment (SCA) and probing patrol function (PPF). Specifically, SCA assigns each server with a unique configuration that contains two paired parameters: a \emph{priority} and an \emph{election timeout}. The priority is an integer that determines the growth of a server's term (the logical time) while the election timeout defines the initial timeout of the server's timer. Initially, configurations are generated stochastically within a range (e.g., priorities are integers from [$1$, $n$] and election timeouts from [100, 200~ms]), where no two servers adopt the same configuration. 

In addition to SCA, \scap dynamically and atomically rearranges configurations among servers through the probing patrol function (PPF) and prepares a pool of prioritized candidates according to their log responsiveness. In principle, PPF arranges configurations that are inclined to win an election to servers that are up-to-date. To achieve this goal, first, PPF keeps track of each server's log index through the periodic leader-to-followers heartbeat. Then, it assigns higher-priority configurations to more up-to-date servers. Next, it distributes the new configurations in the following heartbeat. Finally, each server updates its priority and election timeout according to the received new configuration. Consequently, \scap prepares a pool of candidates as ``future leaders'' with differently prioritized configurations. If the current leader crashes, a candidate with a higher priority is able to defeat a competitor with a lower priority. A new leader will be elected without suffering from unnecessary leadership competitions.

Although there is no one-size-fits-all leader election mechanism for all consensus protocols, \scap can be adapted to support other election protocols, such as ZooKeeper~\cite{hunt2010zookeeper}, Redis cluster election~\cite{redisLeaderElection} and Azure leader election~\cite{azurele}. It also maintains Raft's understandability and the proposed improvement is simple to implement. By applying \scap, the system can always keep a pool of prioritized future leaders under normal operation, preventing potential conflicts before a new leader election takes place.

In this paper, we make the following contributions:
\begin{enumerate}
  \item Raft leader election analysis. We analyzed the split-vote scenarios that prolong Raft's leader election. The analysis addresses the tradeoff by adding randomness to election timeouts to mitigate split votes through our evaluation of a Raft cluster.

   \item The \scap leader election protocol. We designed two major components for \scap: stochastic configuration assignments and the probing patrol function. Configurations are assigned with priorities and rearranged by the probing patrol function, solving candidacy competition by periodically prioritizing servers.

  \item Experimental comparisons of leader election time. We implemented \scap and Raft and evaluated their performance of leader election under various fault scenarios at various scales. We distill the advantages of \scap based on experimental results.
      
\end{enumerate}

The remainder of the paper is organized as follows: Section~\ref{sec:background} provides Raft basics;
Section~\ref{sec:timer-randomization-tradeoff} states the problem of split votes and timer randomization;
Section~\ref{sec:dple} introduces the \scap design in detail; Section~\ref{sec:correctness} discusses correctness properties for \scap; Section~\ref{sec:evaluation} reports on the experimental evaluation of \scap\ against Raft; and Section~\ref{sec:related-work} presents related work.

\section{Background}
\label{sec:background}

\begin{SCfigure}[0.9][t]
\centering
\includegraphics[width=0.65\linewidth]{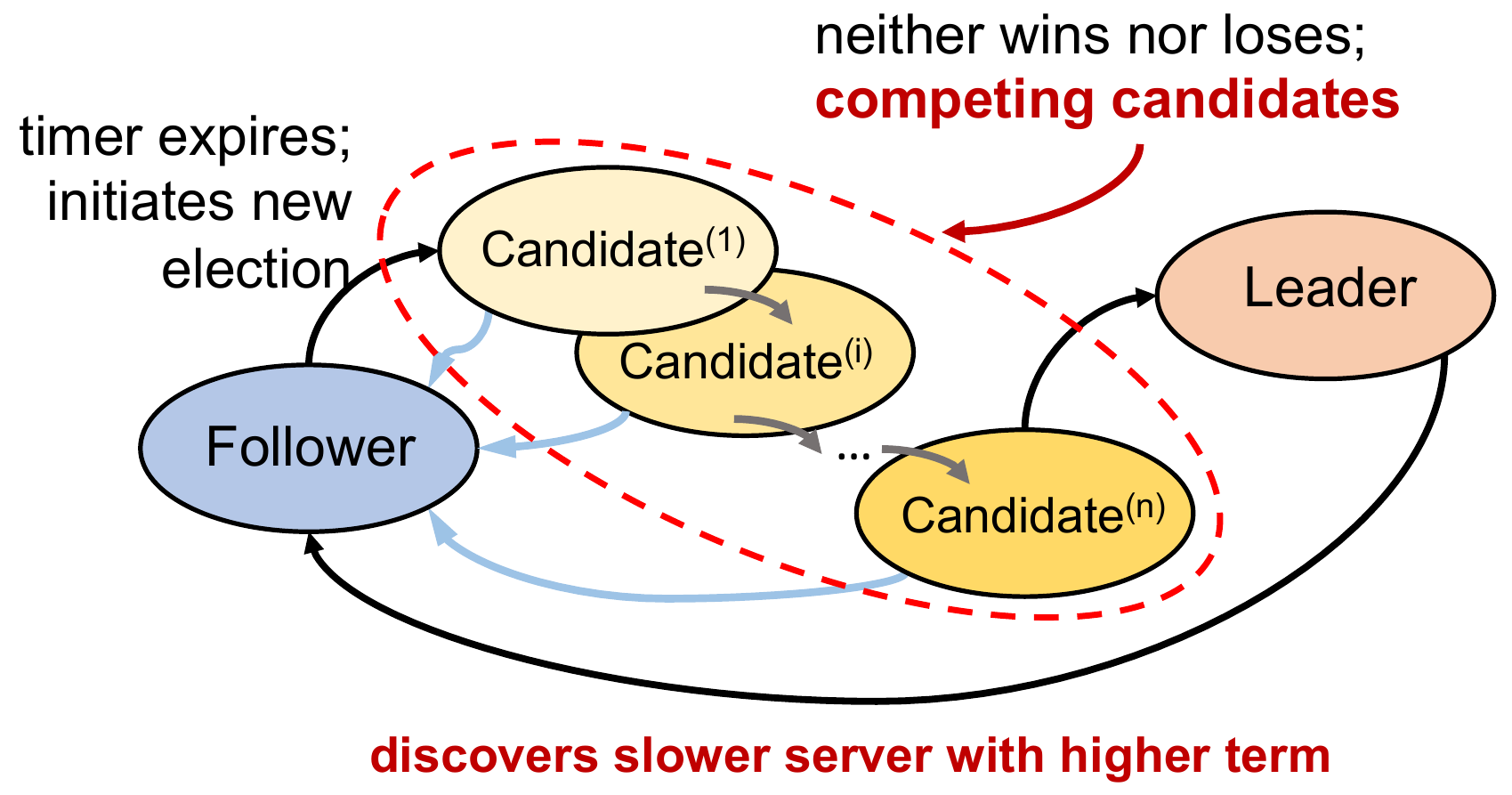}
\hfil
\caption{Server state transitions in Raft.
Due to split votes, servers may stay in the candidate state for multiple elections.}
\label{fig:raft_states}
\end{SCfigure}

Designed for understandability, Raft relies on a leader to attain consensus under all non-Byzantine conditions. Consensus in Raft consists of two major phases: leader election and log replication. The system first elects a leader server by leader election, and then the leader conducts consensus in log replication. We now introduce some key concepts and limitations in Raft leader election. 

\subsection{Raft Basics}
Raft deploys three servers states, and each server assumes one of the three states at any given time: leader, follower, and candidate. Under normal operation, there is only one leader and other servers assume a follower role. Followers only passively respond to requests, whereas the leader dominates the consensus by forcing followers to synchronize its log. Thus, the leader becomes the only bridge between internal servers and external clients. The leader receives entries from clients and issues \texttt{AppendEntries RPCs} to all the other servers. If the leader collects replies from a majority of servers, then the consensus for committing the entries is reached. 

\begin{figure*}[t]
\minipage{0.31\textwidth}
    \centering
    \begin{adjustbox}{width=\linewidth}
        \includegraphics 
        {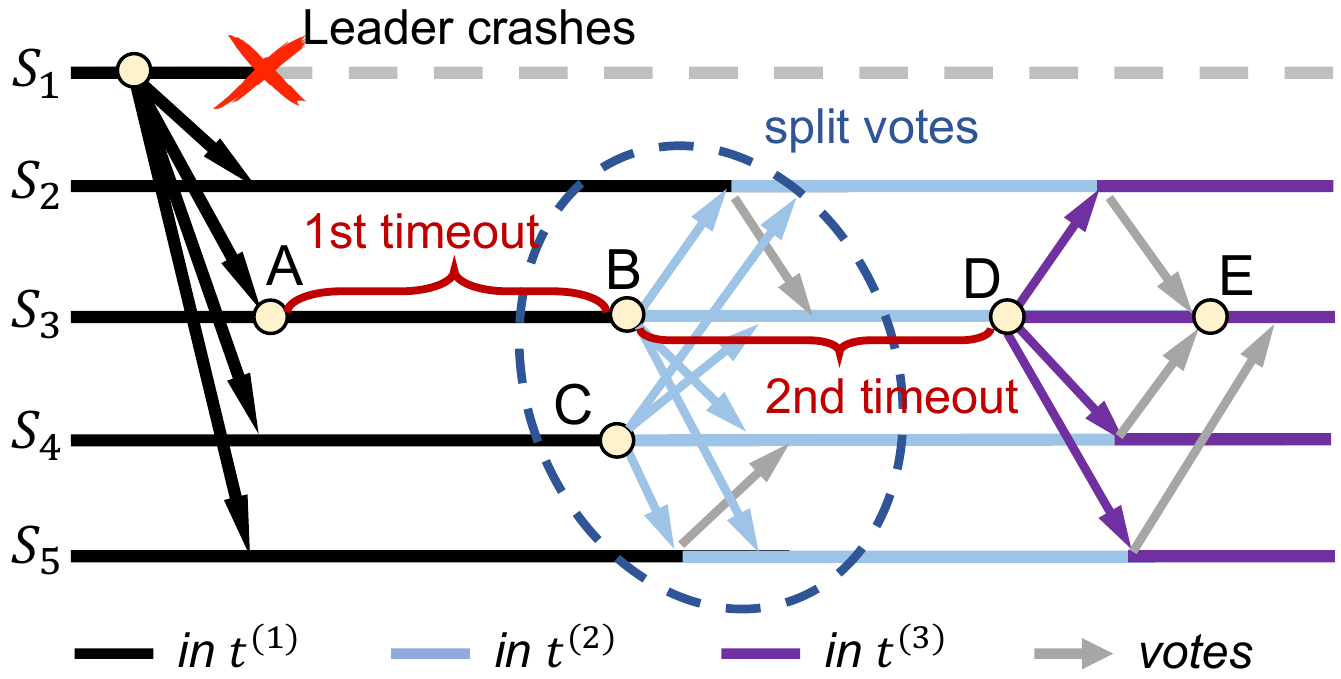}
    \end{adjustbox}
    \caption{Split votes with concurrent election campaigns. A new leader is elected after one of the competing candidates triggers a second timeout.
    }
    \label{fig2:competing_candidates}
\endminipage \hfill
\minipage{0.33\textwidth}
    \centering
    \begin{adjustbox}{width=\linewidth}
        \includegraphics{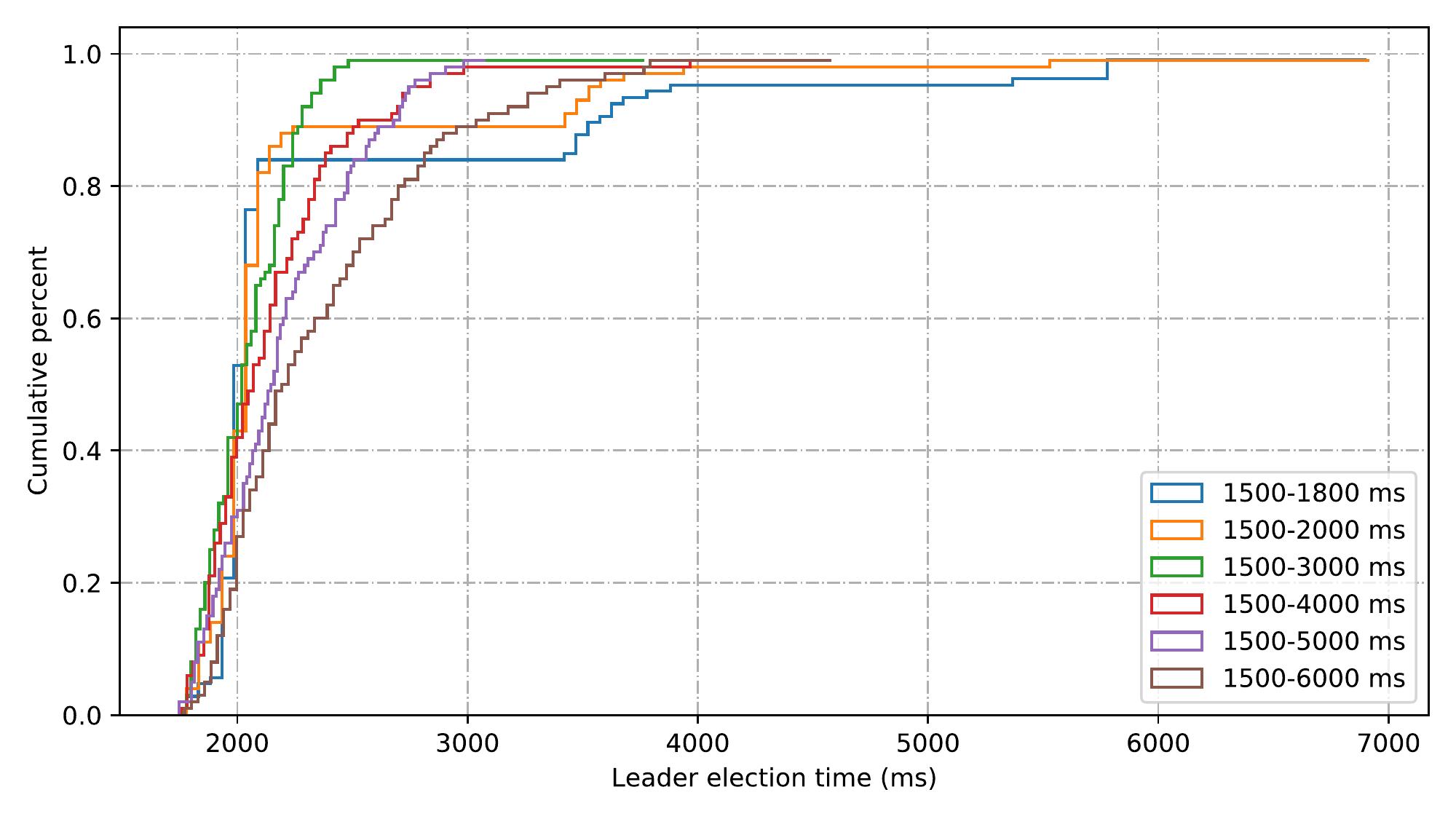}
    \end{adjustbox}
    \caption{Leader election time in a $5$-server Raft cluster under varying amounts of randomness of election timeouts.}
    \label{fig:raft_le_election_timeouts}
\endminipage \hfill
\minipage{0.33\textwidth}
    \centering
    \begin{adjustbox}{width=\linewidth}
        \includegraphics{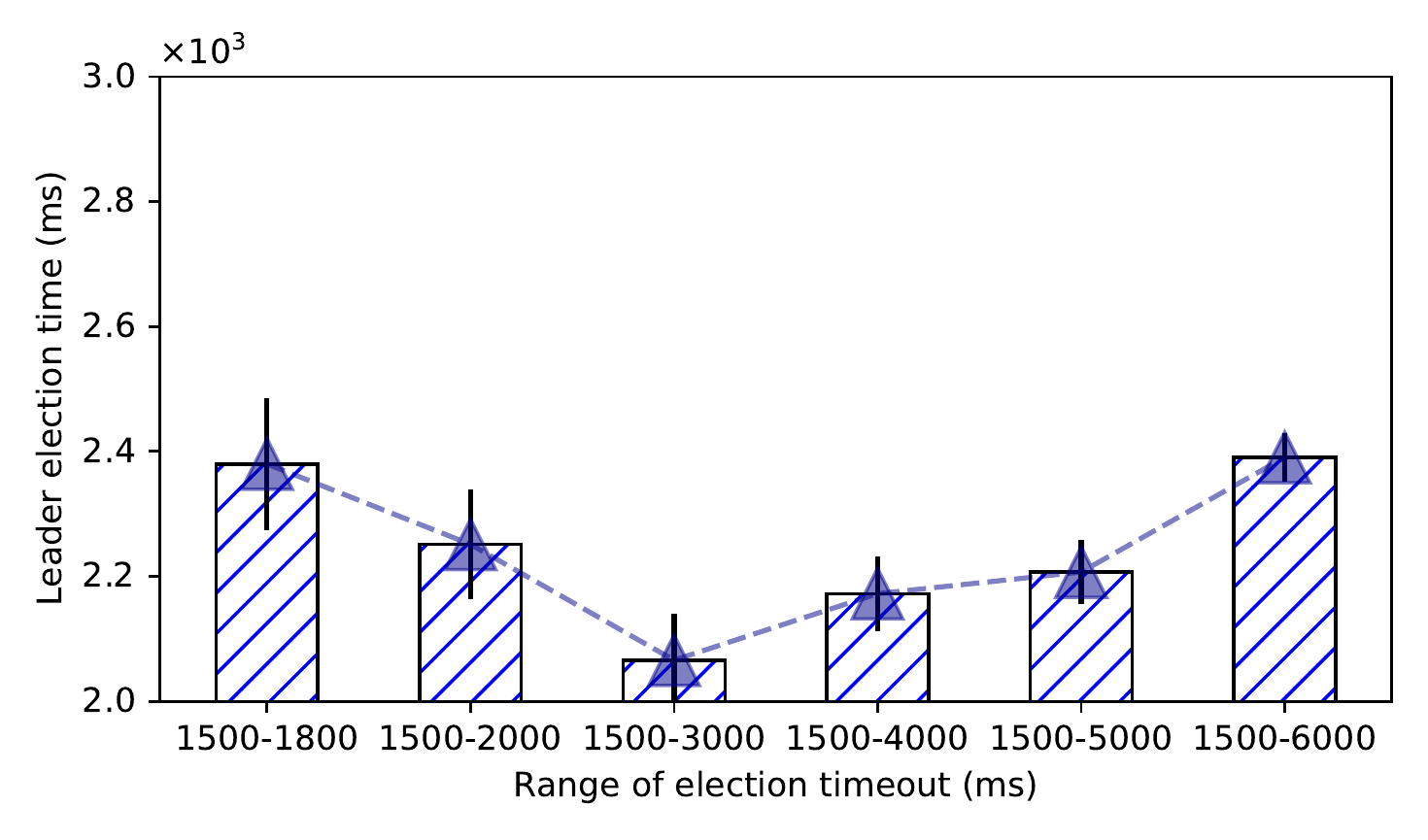}
    \end{adjustbox}
    \caption{Averaged leader election time with increasing amounts of randomness of election timeouts.}
    \label{fig:raft_le_et_bar}
\endminipage
\end{figure*}

In Raft, the logical time is represented by \emph{terms}. Terms are positive integers that increase monotonically. Each term begins with a leader election period and proceeds with a subsequent log replication period if a leader is elected. A server never reduces its term, which prevents the server from receding to a previous time point that could overwrite committed entries. In leader election, a new leader must have the highest term and log entries consistent with a majority of servers. This mechanism forces the system to always elect the most up-to-date server as a leader.

If an incumbent leader crashes, the consensus service is suspended until a new leader is elected. Raft uses timeouts to detect leader failures. Initially, each server joins the system as a follower and starts a timer. A server resets its timer when receiving a heartbeat from the leader. If the leader crashes, a follower will trigger a timeout and transition to a candidate; it increments its term and initiates a leader election campaign by broadcasting \texttt{RequestVotes RPCs} to solicit votes. If it collects votes from a majority of servers (including itself), the candidate becomes the next leader (shown in Figure~\ref{fig:raft_states}). However, if the candidate cannot collect enough votes within the election timeout nor receive a higher-termed message to transition back to a follower, the candidate initiates a new election campaign and repeats the above mentioned process. Other servers vote for a candidate if it satisfies the three following requirements:
1) the candidate's term is not less than the follower's term;
2) the follower has not voted yet in its term;
3) the candidate's log is at least as up-to-date as the follower's log.


\subsection{Split Votes in Leader Election}
\label{sec:split_votes}

The election regime may split votes in a given term. By following the three requirements, votes can be inopportunely split among candidates when concurrent election campaigns take place. Consequently, no candidate can collect votes from a majority of servers to be elected.

Figure~\ref{fig2:competing_candidates} shows an example of split votes in a Raft cluster of $5$ servers, in which terms are denoted by $t$. Assume server $S_1$ was the leader in $t^{(1)}$ (i.e., $term{=}1$) and crashed. Afterwards there was no communication between $S_1$ and the other servers. Consequently, timers of $S_3$ and $S_4$ expire and trigger new election campaigns at points $B$ and $C$, respectively. Meanwhile, $S_3$ and $S_4$ increase their terms to $t^{(2)}$ and broadcast leader election requests. Then, $S_2$ receives an election request with $t^{(2)}$ from $S_3$ before it receives the one from $S_4$. Since each follower only votes once per term, $S_2$ votes for $S_3$ and ignores the request from $S_4$ in $t^{(2)}$. On the contrary, $S_5$ votes for $S_4$ and denies the request from $S_3$. At this time, since neither $S_2$ nor $S_4$ can collect three votes to win, the two candidates are trapped, and a split-votes scenario occurs.

The system has to wait for new elections initiated from servers whose timers expire. At point $D$, $S_3$'s timer expires again, and thus $S_3$ initiates a new election campaign while increasing its term to $t^{(3)}$. Fortunately, $S_3$ succeeds in collecting votes from a majority of servers without disruptions from the other servers. Then, $S_3$ is elected to be the new leader at point $E$ and starts conducting log replication in $t^{(3)}$. After receiving heartbeats issued from the new leader, $S_4$ steps back to follower. From then on, the system recovers from the OTS condition and resumes to normal operation.

In addition, when different groups of servers have a lower internal (in-group) network latency but a higher external (between groups) network latency (e.g., geographically distributed servers), systems are more susceptible to split-vote scenarios. In this case, servers communicate faster within their ``local'' group than with the external servers. Thus, a candidate is more likely to succeed in collecting votes from its own group, and election requests from outside-group candidates will be repeatedly ignored if followers have voted for their ``local'' candidate. This problem also arises in distributed transient networks. Huang et al.~\cite{huang2019performance} hinted that network split and message loss often cause multiple elections, which exacerbate the delay of the emergence of a new leader.

\section{Problem Statement}
\label{sec:timer-randomization-tradeoff}
Raft uses timer randomization to alleviate split votes. We implemented Raft and evaluated its leader election performance in a $5$-server cluster. We measured $1000$ runs for each of the six ranges of election timeouts. The network latency was set to $100$-$200$~ms. Figure~\ref{fig:raft_le_election_timeouts} presents the results in cumulative distribution of percentage, and Figure~\ref{fig:raft_le_et_bar} shows the average election time. The results indicate that adding randomization can help the system to avoid split votes. A successful leader election campaign comprises two periods: the detection of the absence of leadership and election of a new leader. The former one is achieved by timer expiration with randomized election timeouts, and the latter one is accomplished by collecting votes from a majority of servers. 

However, there exists a tradeoff between the durations of detection and election, which are associated with timer randomization. In particular, if the range of the randomization is extended, the positive effect is that it can shorten the election period but extends the detection period. A higher amount of randomness can alleviate split votes because the election timeouts of timers have a higher chance to differ from each other. This reduces the chance that multiple candidates simultaneously initiate election campaigns to solicit votes. While followers only passively respond to other servers, the negative effect is that it takes longer for servers to detect the absence of leadership, thereby increasing the leader election time. Conversely, if the amount of randomness shrinks, though it takes a shorter amount of time to detect a crashed leader, the probability of competing candidates increases. Servers are more likely to simultaneously initiate leader election campaigns, which inflict additional phases in leader election and prolong the emergence of a new leader.

For example, in Figure~\ref{fig:raft_le_election_timeouts}, with $300$~ms of randomness from the range of $1500{-}1800$~ms, due to split votes, approximately $18\%$ of the election campaigns cannot converge in $3500$~ms. The split-vote situation is mitigated with a larger amount of randomness. For example, with $500$~ms of randomness from the range of $1500{-}2000$~ms, less than $12\%$ of the election campaigns experience split votes, and according to Figure~\ref{fig:raft_le_et_bar}, the average election time decreases. However, with increasing amount of randomness, the election time rises, in which the detection of the absence of leadership takes longer than the election of a new leader.


In clusters at large scales, we can set even larger amounts of randomness to avoid split votes, but this may not benefit to shorten the OTS time. If we could avoid split votes without sacrificing the duration of detection or election, the performance of leader election can be improved from both sides, thereby enhancing the reliability of services.


\section{\scap Leader Election}
\label{sec:dple}
In this section, we introduce \scap, a leader election protocol that takes precautions against leader failures by preparing a pool of ``potential leaders''. These potential leaders are furnished with configurations that enable them to win in future election campaigns. A configuration contains a unique priority and an election timeout, and configurations are dynamically assigned to servers.

In general, \scap enables fast leader election by always assigning a configuration that leads to an undefeated campaign to a server that has the most potential to become the next leader. To achieve this goal, \scap assigns unique configurations to different servers via stochastic configuration assignment (SCA). In normal operation, \scap dynamically rearranges and distributes these configurations through the probing patrol function (PPF).

\subsection{Stochastic Configurations Assignment}
\label{sec:stochastic_config_assinments}
The main idea of SCA is to enable differential growth rates of terms when multiple election campaigns take place. 
Let us first consider all candidates are qualified to assume future leadership; that is, they have the same up-to-date log states and are in the latest term (disqualified candidates will be considered in Section~\ref{sec:probing_patrol}). 
By applying differential growth rates of terms, if followers trigger timeouts simultaneously, their election campaigns will take place in different terms and will not flock into a single term to form \textit{flocked elections} that compete against each other as in Raft. The candidate with the highest-term campaign wins and becomes the next leader.

\subsubsection{Stochastic configurations}
To achieve stochastic configuration assignment, when joining the system, each server $S_i$ assigns a unique priority, denoted by $\mathcal{P}_i$. The priority determines the term growth and election timeouts. A configuration is denoted by
$\pi^{(\mathcal{P}_i)}$ where $\mathcal{P}_i$ is the priority. 
To maintain simplicity, \scap implements the priority by using server IDs. A server assigns a unique server ID, $i$, as its initial priority, where $i$ is not assigned to any other server; that is, $\forall S_i, S_j \in \scap, i \neq j$. 
A server's ID determines its priority such that $\mathcal{P}_i = i$.

\subsubsection{Election timeouts}
A server's priority initializes its election timeout as Eq.~\ref{eq:1}.
\begin{equation}
\label{eq:1}
period_i = baseTime + k(n - \mathcal{P}_i)
\end{equation}
In Eq.~\ref{eq:1}, $baseTime$ is of constant value that should be set significantly larger than the network latency (e.g., if the network latency is $10$ms, we can set $baseTime {=} 100$ms). $k$ is a constant value (ms) for adjusting the election timeout interval; a higher $k$ promotes a larger gap among server election timeouts. $n$ is the number of servers in the system.
For example, in a $10$-server cluster with a setting of $baseTime {=} 100$~ms and $k {=} 10$, server $S_2$'s initial priority $\mathcal{P}_2 {=} 2$, so its election timeout $period_2 {=} 180$ms. For server $S_{10}$ ($\mathcal{P}_{10} {=} 10$), its initial election timeout is the base time ($100$~ms). 

A configuration with a higher priority is paired with a shorter election timeout. Normally, with a shorter election timeout, a server often detects leader failures faster since its timer soon runs out of time. This regime provides an optimization that a higher-priority server will have a quicker response to leader failures.

\begin{figure*}[t]
\captionsetup[subfigure]{}
\centering
\subfloat[
All servers synchronously respond to the leader in the $k$-th configuration heartbeat, but $S_4$ and $S_5$ fall behind in log replication. Then, the higher-priority configurations previously possessed by $S_4$ and $S_5$ are rearranged to more up-to-date servers ($S_2$ and $S_3$).]
  {\includegraphics[width=0.45\textwidth]{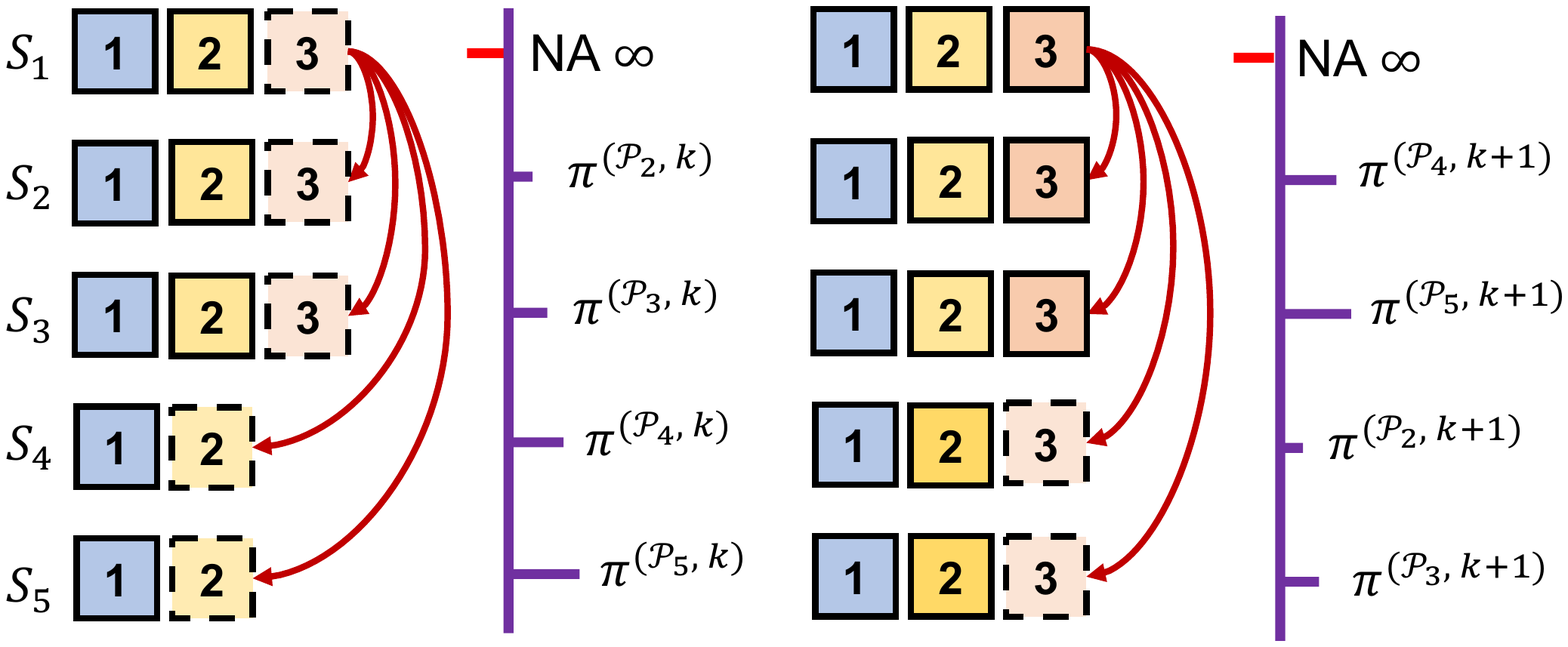}%
\centering
\label{fig:patrol-a}}
\hfil 
\subfloat[
$S_2$ and $S_4$ crash after receiving the $k$-th configuration heartbeat. Then, $S_2$ recovers in the $k{+}1$-th heartbeat, but $S_3$ is still crashed. After the rearrangement, since $S_4$ cannot receive the newly issued configuration, $S_4$ will have a stale configuration after recovery.]
  {\includegraphics[width=0.45\textwidth]{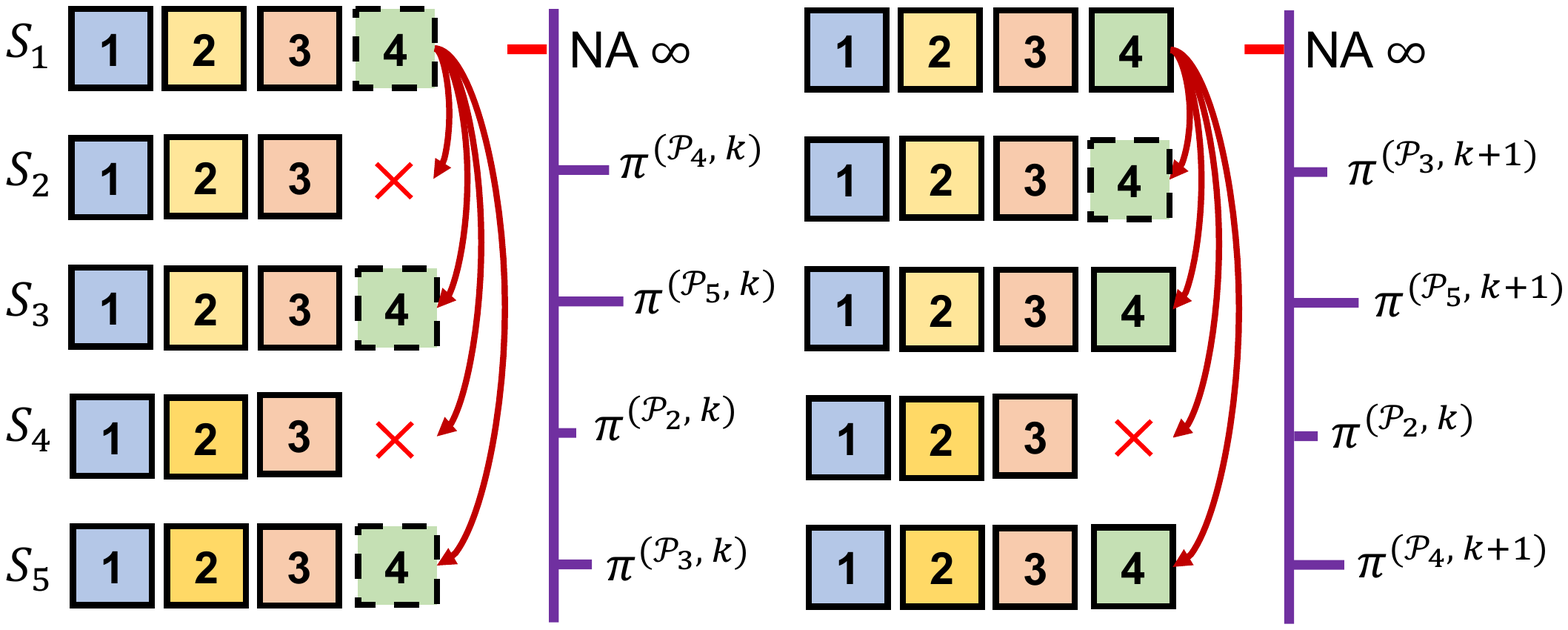}
\label{fig:patrol-b}}
\caption{Examples for the probing patrol function in a 5-server \scap system.}
\label{fig:patrol}
\end{figure*}

\subsubsection{Term growth}
A server's priority also determines its term growth. In \scap, we denote a server $S_i$'s term as $\mathcal{T}_{S_i}$. If $S_i$ initiates a new leader election campaign, $S_i$ increases $\mathcal{T}_{S_i}$ by Eq.~\ref{eq:2}:
\begin{equation}
\label{eq:2}
\mathcal{T}_{S_i}^{(k+1)} \leftarrow \mathcal{T}_{S_i}^{(k)} +  \mathcal{P}_{i},
\end{equation}
In addition, a server always updates its term when it receives messages with a higher term from other servers. The update is in accordance with Eq.~\ref{eq:3}.
\begin{equation}
\label{eq:3}
    \mathcal{T}_{S_i}^{(k+1)} \leftarrow max(\mathcal{T}_{S_i}^{(k)}, \mathcal{T}_{S_j, j \neq i}^{(k+1)}),
\end{equation}
where $\mathcal{T}_{S_i}^{(k)}$ is the term of the latest update, and $\mathcal{T}_{S_j, j \neq i}^{(k+1)}$ is the received term from other servers. $S_i$ always sets its term to the highest value regardless of the other arguments in a received message. For example, let us consider a server $S_i$ with a priority $\mathcal{P}_i=2$ working in term $3$. If $S_i$'s timer expires, $S_i$ initiates an election campaign and sets its term to $5$ (Eq.~\ref{eq:2}). If $S_i$ receives a message sent from a server $S_j$ whose term is $4$, $S_i$ simply ignores the message. However, if $S_j$'s term is $6$, $S_i$ sets its term to $6$, regardless of other information in the message (Eq.~\ref{eq:3}). 

In contrast to Raft, the increment of logical time in \scap is no longer consecutive: simultaneously initiated election campaigns are scattered into different terms. This avoids flocked elections and split votes before a leader appears. Similar approaches have been adapted by some state-of-the-art systems, such as Redis LE~\cite{redisLeaderElection} and Zookeeper~\cite{zookeeperleaderelection}.
However, SCA alone does not always lead the system to the desired condition since the network may change unpredictably. Current up-to-date servers may fall behind in the next round, whereas slower servers may catch up. Therefore, a mechanism that rearranges configurations based on server log responsiveness is necessary to keep the election scheme efficient.

\subsection{Atomic Configuration Rearrangement}
\label{sec:probing_patrol}

A new leader is the candidate that has successfully collected votes from a majority of servers in the highest term. Yet, server configurations are not related to log responsiveness (i.e., log replication status). A tricky scenario may occur: if the candidate that has the highest term does not have an up-to-date log, its election campaign cannot succeed because up-to-date servers never vote for it; other servers resume their election campaigns after updating their terms to the highest one. Therefore, the configuration that empowers a high term growth is ``wasted'' if the candidate is not an up-to-date server, which undermines the purpose of applying SCA.


To tackle this challenge, \scap reassigns configurations that are inclined to win a future election to servers that have up-to-date logs through the probing patrol function (PPF). First, PPF keeps track of servers' log index, which indicates a server's log responsiveness, via the periodical heartbeat sent from the leader (Listing~1). Then, the leader collects replies and rearranges configurations to each server. Next, it broadcasts new configurations in the following heartbeat. Finally, servers update to the newly assigned configuration if the received one is different. 

PPF assists the redistribution of configurations based on log responsiveness, yet the process is not atomic because configurations are not atomically rearranged. For instance, if server $S_i$ with configuration $\pi^{(\mathcal{P}_i)}$ crashes, the leader may assign $\pi^{(\mathcal{P}_i)}$ to another server (say $S_j$). When $S_i$ recovers, $S_i$ restores its stale priority ($\pi^{(\mathcal{P}_i)}$); at this time, if the leader crashes, $S_i$ and $S_j$ may trigger election campaigns simultaneously and split votes.

\definecolor{mygray}{rgb}{0.4,0.4,0.4}
\definecolor{mygreen}{rgb}{0,0.8,0.6}
\definecolor{myorange}{rgb}{1.0,0.4,0}

\lstset{
basicstyle=\footnotesize\ttfamily\color{black}\bfseries,
commentstyle=\color{mygray},
frame=single,
numbersep=10pt,
numberstyle=\tiny\color{mygray},
keywordstyle=\color{blue},
showspaces=false,
showstringspaces=false,
stringstyle=\color{myorange},
tabsize=2
}

\begin{lstlisting}[language=GO, caption={Parameters in \scap compared with Raft},frame=none,escapechar=!]
//the parameters of AppendEntries RPCs
type AppendEntriesArgs struct{
    term         int64
    leaderId     string
    prevLogIndex int64
    prevLogTerm  int64
    entries[]    Entries
    leaderCommit int64
    !\tikzmark{a}!newConfig    Configurations //newly added!\tikzmark{b}!
}

type Configurations struct{
    timerPeriod time.Duration
    priority    int64
    confClock   int64
}

//the reply messages
type AEReplyArgs struct{
    term      int64
    success   bool
    !\tikzmark{c}!status    configStatus //newly added!\tikzmark{d}!
}

type configStatus struct{
    LogIndex    int64
    timerPeriod time.Duration 
}
\end{lstlisting}
\begin{tikzpicture}[remember picture, overlay]
\draw[red] ([shift={(-3pt, 1.5ex)}]pic cs:a) rectangle   ([shift={(3pt,-0.65ex)}]pic cs:b);
\draw[red] ([shift={(-3pt, 1.5ex)}]pic cs:c) rectangle   ([shift={(3pt,-0.65ex)}]pic cs:d);
\end{tikzpicture}

\vspace{-1.8em}

To prevent stale configurations from interfering with leader election, \scap adds a hyperparameter, configuration clock (denoted by \texttt{confClock} in Listing~1), that shows the freshness of configurations. The configuration clock is the logical clock of configuration rearrangements. In a configuration $\pi^{(\mathcal{P}_i, k)}$, $k$ denotes the configuration clock of this configuration, $\pi$. In normal operation, the configuration clock increments monotonically with the number of heartbeats. In leader election, \emph{servers never vote for candidates whose configuration clock is stale}. That is, a candidate's configuration clock should not be less than a voter's configuration clock. As for deciding a new leader, \scap adopts all the rules of Raft.

Figure~\ref{fig:patrol} shows two examples of configuration rearrangements in a $5$-server cluster. In Figure~\ref{fig:patrol-a}, initially, server configurations are assigned in accordance with Eq.~\ref{eq:1}. Since PPF rearranges configurations according to log responsiveness,
if higher-priority servers fall behind in log replication, servers with the same log as the leader (most up-to-date logs) are assigned with higher-priority configurations.

When a server recovers (e.g., from a crash or network issues), its configuration may become stale because the server is unable to sync up with the leader. For example, in Figure~\ref{fig:patrol-b}, $S_2$ and $S_4$ crash and do not respond to the leader after they receive the $k$-th configuration heartbeat. Then, $S_2$ recovers in the next heartbeat, but $S_4$ does not. $S_2$'s high-priority configuration is assigned to $S_5$, though $S_5$ still has a fresh configuration. Unfortunately, $S_4$'s configuration becomes stale. After $S_4$ recovers, $S_4$'s \texttt{confClock} is $k$, but the system already operates in a higher configuration clock. $S_4$ needs to synchronize with the leader to refresh its configuration.

The probing patrol function enables a resilient configuration distribution and promotes the efficiency of leader election since it avoids that a higher-priority configuration is assigned to a losing candidate (i.e., a server without the most up-to-date log). Therefore, when the current leader crashes, the server with the highest-priority configuration has the maximum potential to detect the leader failure and initiate a new election campaign before any other servers.


\subsection{Declining Competing Candidates}

The combination of SCA and PPF enables fast leader election that avoids split votes. When candidates start new elections simultaneously, the election with the highest term always defeats other elections. After followers synchronize to the highest term, they will refuse to respond to lower-term requests; thus, votes in a given term aggregate at one server, terminating the election in a single election campaign.

\begin{figure}[t]
    \centering
    \includegraphics[width=0.4\textwidth]{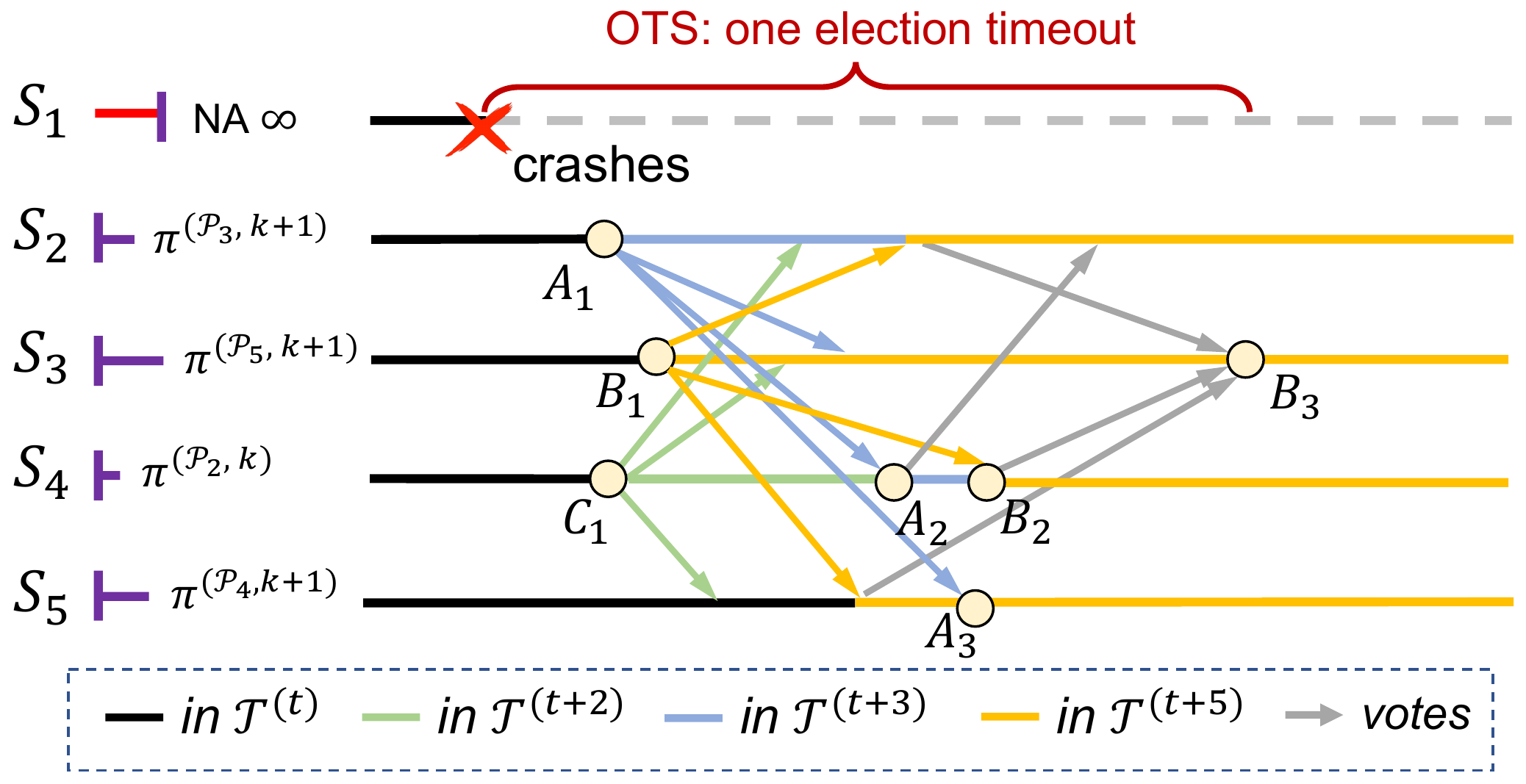}
    \caption{Concurrent election campaigns, initiated simultaneously,
      always have a discrepancy in terms; \scap\ leads to faster
      convergence by electing a higher termed leader.}
    \label{fig:scap_leader_election}
\end{figure}

Suppose the leader in Figure~\ref{fig:patrol-b} crashes in term $t$, and we show an extreme case (Figure~\ref{fig:scap_leader_election}) to illustrate how \scap elects a leader without competition when three concurrent leader election campaigns materialize. 
After $S_2$, $S_3$, and $S_4$'s timers expire, three election campaigns are initiated simultaneously at point $A_1$, $B_1$, and $C_1$, respectively. In this case, no server votes for $S_4$ since $S_4$'s configuration is stale. 
On the contrary, $S_2$ and $S_3$ are up-to-date servers, but $S_3$'s election campaign has a higher term ($t{+}5$) than $S_2$ (in term $t{+}3$). Thus, $S_4$ synchronizes to term $t{+}3$ at point $A_2$ and to term $t{+}5$ at point $B_2$, while $S_5$ synchronizes directly to term $t{+}5$ and then rejects $S_2$'s request at point $A_3$. $S_3$'s election campaign takes over the other campaigns, and $S_3$ is elected at point $B_3$, when votes from a majority of servers are collected. Overall, leader election converges in one election campaign, and the system recovers after $S_3$ becomes the leader.

\scap maintains the understandability of Raft and can also be easily deployed. If the system is fully synchronous, the configuration rearrangement can be implemented by a separate heartbeat at a low interval rate. The separation can reduce the messaging cost. The leader in \scap needs to sort and assign configurations, a task with linear time complexity, imposing a slight computational cost.

The idea of \scap differs from the optimization that Raft intended to apply to solve split votes. The authors of Raft argued that they tried to rank candidates in the same term to solve split-vote scenarios and implied that this measure would cause corner cases that undermine safety~\cite{ongaro2014search}. Figure~\ref{fig:op_compare} shows an intuitive comparison of the ideas behind Raft and \scap where three candidates initiate election campaigns. Term surfaces represent the logical times a system currently accrues. If we draw a term surface for candidates (not considering followers), Raft leader election campaigns are more likely to lie in the same surface (left in Figure~\ref{fig:op_compare}). Raft's authors intended to rank candidates in the same surface to intensify the competition among candidates in a given term, thereby the additional mechanisms may cause corner cases that put log safety in peril. However, due to the discrepant term growth enabled by prioritized configurations in \scap, simultaneously initiated election campaigns escape from competing on the same surface but reside on their own term surfaces, and the system leaps to the highest surface with a new leader. Therefore, concurrent campaigns are in a cube ($3D$ vs. $2D$) consisting of varying term surfaces, and the candidate on the top surface defeats the other campaigns, leading the system to the highest term.

In addition to the use in Raft, \scap can also be adapted to other consensus protocols. 
Since the correctness of leader-based consensus protocols relies on the combination of leader election (view-change) and log replication phases, there is no one-size-fits-all leader election mechanism that stands alone to be directly applicable in every consensus protocol. 
However, the concept of \scap can be adopted to optimize leader-based consensus protocols in terms of preparing ``future leaders'' in advance, before the next leader election takes place. 
For example, the slave election and promotion in Redis~\cite{redisLeaderElection} ranks slave servers with \texttt{SLAVE\_RANK}, which determines the delay for a slave server to get elected in a given \texttt{configEpoch}. This approach has similar issues as Raft's competing candidates discussed in Section~\ref{sec:background}. Slave servers may simultaneously initiate elections and reside in the same \texttt{configEpoch}. Applying \scap can avoid potential competitions and boost the election process. The time spent on leader election is something that leader-based systems want to avoid but have to endure; similar cases such as~\cite{azurele} and~\cite{hunt2010zookeeper} can also apply \scap to enable fast leader election, paving the way to avoid potential future conflicts.

\begin{figure}[t]
    \centering
    \includegraphics[width=0.35\textwidth]{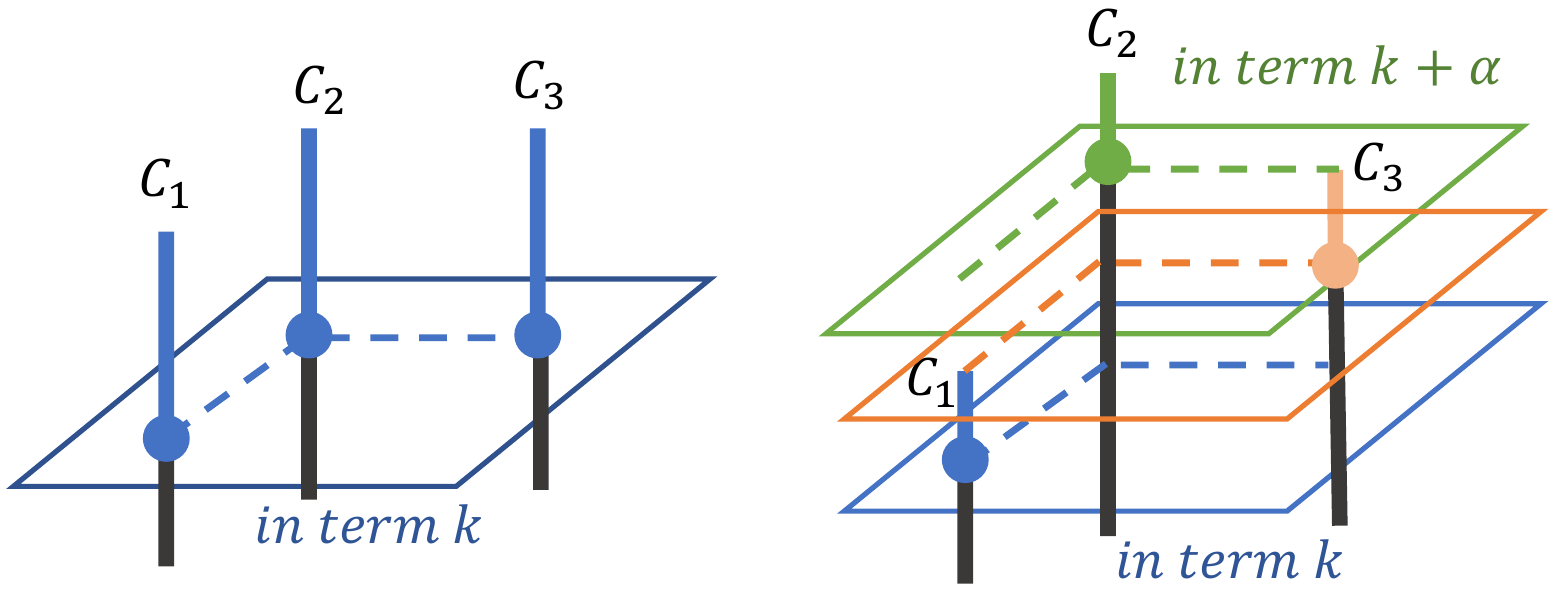}
    \caption{Raft intended to rank competing candidates whose campaigns are in the same term (left), while \scap involves priority-based configuration assignments to distribute concurrent campaigns into different terms (right).}
    \label{fig:op_compare}
\end{figure}

\section{Correctness Arguments}
\label{sec:correctness}
While accelerating the completion of leader election by avoiding split votes, \scap maintains Raft's safety and liveness properties. \scap also exhibits a stronger liveness than Raft. We discuss these points in detail in the remainder of this section.

\begin{lemma}
\label{lemma:translation}
An \scap leader election is able to be translated to Raft leader elections.
\end{lemma}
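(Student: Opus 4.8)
The plan is to produce an order-preserving relabeling of terms that carries any \scap leader-election execution onto a sequence of Raft leader elections obeying Raft's voting discipline, so that Raft's existing election analysis can be invoked directly rather than re-proved. During election the only departures of \scap from Raft are the term increment, which is the priority $\mathcal{P}_i$ in Eq.~\ref{eq:2} instead of the constant $1$, and the additional voting guard on \texttt{confClock}. The whole argument rests on showing that the first departure is merely a rescaling of the term axis and that the second only ever \emph{forbids} votes Raft would have allowed.

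First I would fix an execution $\mathcal{E}$ of \scap and collect the distinct values $\tau_1 < \tau_2 < \cdots$ that ever appear as the term of a vote request in $\mathcal{E}$. Because these are strictly increasing positive integers, the rank map $\phi(\tau_j) = j$ is a well-defined order-preserving bijection onto $1,2,\ldots$, and every $\max$ update of Eq.~\ref{eq:3} becomes a $\max$ update of the relabeled terms. I would then replace every term field of every message and server snapshot in $\mathcal{E}$ by its image under $\phi$, turning the scattered \scap campaigns into a globally consecutive list of campaigns. The difference in increment size is immaterial here: Raft's safety and election arguments depend only on the \emph{order} of terms and on the per-term voting rules, never on the numerical gap between successive terms, so a relabeled execution in which a server's own two campaigns are separated by several intervening terms (claimed by other servers) is exactly the familiar Raft situation of a candidate whose term has advanced past it.

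The heart of the verification is that $\phi(\mathcal{E})$ satisfies Raft's three voting requirements. Requirement~1 (the candidate's term is at least the voter's) and all term comparisons transfer because $\phi$ is order-preserving; requirement~3 (log at least as up-to-date) is syntactically identical in both protocols and untouched by $\phi$. The delicate point is requirement~2, the single vote per term: in \scap one server may legitimately vote for two \emph{different} candidates, which superficially contradicts Raft. The resolution is that those two campaigns necessarily carry two \emph{different} \scap terms, which $\phi$ sends to two different Raft terms, so each vote lands in its own Raft term and per-term uniqueness is preserved exactly; symmetrically, should two servers collide on a single \scap term (possible when they leap from different starting terms), $\phi$ collapses them into one Raft term, reproducing a legitimate Raft flocked election in which the voter still casts one vote. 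Finally, since the \texttt{confClock} guard can only cause \scap to reject a request Raft would have accepted, omitting that guard in the translation merely enlarges the admissible set, so every vote actually cast in $\mathcal{E}$ is admissible in $\phi(\mathcal{E})$; and because $\phi$ preserves the highest-term outcome, the same server is elected.

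The step I expect to be the real obstacle is requirement~2 together with exhaustiveness. I must check that the single global map $\phi$ is simultaneously consistent for servers that observed different subsets of terms, and---more importantly---that the two sub-cases above (distinct terms mapping apart, collisions mapping together) cover \emph{every} vote pattern \scap can generate, so that no \scap behavior is left that becomes unrealizable under Raft's rules once $\phi$ is applied. Establishing this exhaustiveness, rather than the routine monotonicity and log checks, is where the genuine work lies.
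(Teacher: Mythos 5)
There is a genuine gap, and it lies exactly where you suspected the work would be. Your strategy is to \emph{compress} the term axis: relabel the terms that actually occur so that they become consecutive, then argue the relabeled execution is a Raft execution. But no global order-preserving relabeling can achieve this. Consider two candidates with priorities $\mathcal{P}_i \neq \mathcal{P}_j$ that time out from a common term $t$ and, by Eq.~\ref{eq:2}, leap to $t+\mathcal{P}_i$ and $t+\mathcal{P}_j$. Any order-preserving $\phi$ maps $t$ to some value $v$ and the two campaign terms to distinct values strictly greater than $v$, so at most one of the two candidates has a relabeled increment of exactly $1$. The other performs a single timeout step whose term jumps by more than one, which is not a legal Raft transition: a Raft candidate increments its term by exactly $1$ per timeout, and jumps occur only upon \emph{receiving} a higher-term message. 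Your fallback --- that Raft's analysis ``depends only on the order of terms, never on the numerical gap'' --- is a claim about Raft's proofs, not about Raft's executions; to use it you would have to re-prove Raft's election analysis for a generalized, non-unit-increment protocol, which defeats the purpose of the lemma (reducing \scap behavior to Raft behavior so that Raft's guarantees can be cited as-is).

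The missing idea is the paper's construction, which goes in the opposite direction: instead of compressing \scap's scattered terms, it \emph{expands} each \scap election into Raft elections. A candidate with priority $\mathcal{P}$ that leaps from term $t$ to $t+\mathcal{P}$ is simulated by $\mathcal{P}$ consecutive Raft elections occurring inside a ``blackout window'': the candidate is disconnected, times out $\mathcal{P}$ times in a row, incrementing its term by $1$ each time, and only the last campaign reaches the network. Raft tolerates arbitrary benign disconnection, so this is a legitimate Raft execution, and --- crucially for the downstream Lemma~\ref{lemma:indistinguishable} --- it leaves every term value in every delivered message \emph{unchanged}, whereas your relabeling rewrites the term fields of messages and would therefore be observable to other servers, complicating the indistinguishability argument the paper builds on top of this lemma. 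Your treatment of the \texttt{confClock} guard (refusals only shrink the admissible vote set) is fine, though to make the translated run a genuine Raft execution you should model those refusals as message loss rather than merely note admissibility.
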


\begin{proof}
An \scap leader election can be implemented by multiple consecutive Raft leader elections depending on a server's priority. We denote an \scap leader election initiated by a server with a priority $\mathcal{P}$ as $E_{\mathcal{P}}$. We assume the system is in term $t$ (i.e., $\mathcal{T}^{(t)}$) before the election takes place. Consequently, the system will be in term $t+\mathcal{P}$ after the server wins the election and becomes the new leader. We show that $E_{\mathcal{P}}$ can be implemented by $\mathcal{P}$ consecutive Raft leader elections whose term increases from $t$ to $t+\mathcal{P}$, denoted by $\mathcal{R}_{t \rightarrow t+\mathcal{P}}$; that is,
$$ \mathcal{R}_{t \rightarrow t+\mathcal{P} } \Longrightarrow E_{\mathcal{P}}, \text{\ where \ } |\mathcal{R}_{t \rightarrow t+ \mathcal{P}}| = \mathcal{P}$$

Suppose a server $S$ operating in term $t$ in Raft triggers a timeout and becomes a candidate. $S$ then increases its term to $t+1$. At this time, if $S$ loses its connection to other servers, $S$ will time out again and increase its term to $t+2$. We denote this period of disconnection as a \emph{blackout window}. Raft allows a sufficiently large blackout window because it tolerates all non-Byzantine (benign) failures. Thus, a server's leader election with a priority $\mathcal{P}$ in \scap is implemented by $\mathcal{P}$ leader elections in Raft that take place in a blackout window. \end{proof}

Figure~\ref{fig:liveness_proof} illustrates an example of Lemma~\ref{lemma:translation}. $S_i$ and $S_j$ are two \scap servers. Before their elections, $S_i$ and $S_j$ are in term $\mathcal{T}^{(t)}$ with configurations of $\pi^{(P_3)}$ and $\pi^{(P_2)}$, respectively. After initiating their election campaign, $S_i$'s term increases directly to $\mathcal{T}^{(t+3)}$ (at $A_3$) while $S_j$'s term increases to $\mathcal{T}^{(t+2)}$ (at $B_2$). These two election campaigns can be implemented by Raft leader elections, where $S_i$ has a blackout window of three consecutive timeouts (at $A_1$, $A_2$, and $A_3$), and $S_j$ has that of two consecutive timeouts (at $B_1$ and $B_2$).

\begin{theorem}[Validity]
If all the servers have the same input, then any value decided upon must be that common input.
\end{theorem}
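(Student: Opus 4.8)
The plan is to reduce \scap's validity to Raft's validity, exploiting the fact that \scap alters only the leader election phase and leaves the log replication machinery untouched. First I would isolate exactly what \scap changes relative to Raft: the term-growth rule of Eq.~\ref{eq:2}, the election-timeout initialization of Eq.~\ref{eq:1}, and the \texttt{confClock} voting restriction. None of these touch the rules by which an entry is appended, replicated, or committed — the excerpt states that ``as for deciding a new leader, \scap adopts all the rules of Raft,'' and a decision (commit) occurs entirely within log replication. Hence the set of values a server can ever decide upon is governed by the same mechanism as in Raft.

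Next I would invoke Lemma~\ref{lemma:translation}. Since every \scap election $E_{\mathcal{P}}$ is translatable to a run of $\mathcal{P}$ consecutive Raft elections yielding the same resulting leader and log state, any \scap execution corresponds to a Raft execution that elects the same leaders and therefore commits the same entries. Validity is preserved under this translation because it is a statement purely about committed values, and the translation is value-preserving on the log. Thus it suffices to establish validity for the underlying Raft behavior.

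Finally I would argue validity for the Raft layer directly. A decided value is a committed log entry, and the leader only ever proposes entries drawn from the servers' inputs; an entry becomes committed only after it is replicated to a majority and passes the current-term commit check. If every server holds the same input $v$, then the only value any elected leader can propose — and hence the only value that can reach a majority and be committed — is $v$. Therefore any decided value equals the common input $v$.

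The main obstacle I anticipate is not the core argument but making the reduction airtight at the boundary between election and replication. Specifically, I must verify that \scap's \texttt{confClock} restriction never \emph{weakens} Raft's log-up-to-dateness voting requirement: a server with a stale configuration (Figure~\ref{fig:patrol-b}) must still be barred from winning unless its log is at least as up-to-date as a majority's, since otherwise a leader with a divergent log could in principle propose a value other than $v$. Because \texttt{confClock} only \emph{adds} a rejection condition on top of Raft's three voting rules rather than relaxing any of them, this cannot occur and the reduction goes through. The remaining care is purely in formalizing ``input'' within the leader-dominated model so that the claim ``a leader proposes only inputs'' is stated precisely.
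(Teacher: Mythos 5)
Your proposal is correct and takes essentially the same route as the paper: reduce \scap's validity to Raft's by invoking Lemma~\ref{lemma:translation} together with the observation that \scap's changes are isolated to leader election and leave log replication (hence commitment) untouched. You additionally spell out Raft's own validity argument and check that the \texttt{confClock} rule only \emph{adds} a rejection condition to Raft's voting rules — details the paper leaves implicit — but the core reduction is identical.
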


\begin{proof}
The validity property holds in Raft~\cite{ongaro2014search}, where a committed value must have been logged by a majority of servers. With Lemma~\ref{lemma:translation}, since \scap leader election can be implemented by Raft leader elections, the improvement made by \scap is isolated from log replication in Raft. Thus, \scap maintains the validity property.
\end{proof}

\begin{figure}[t]
    \centering
    \includegraphics[width=0.9\linewidth]{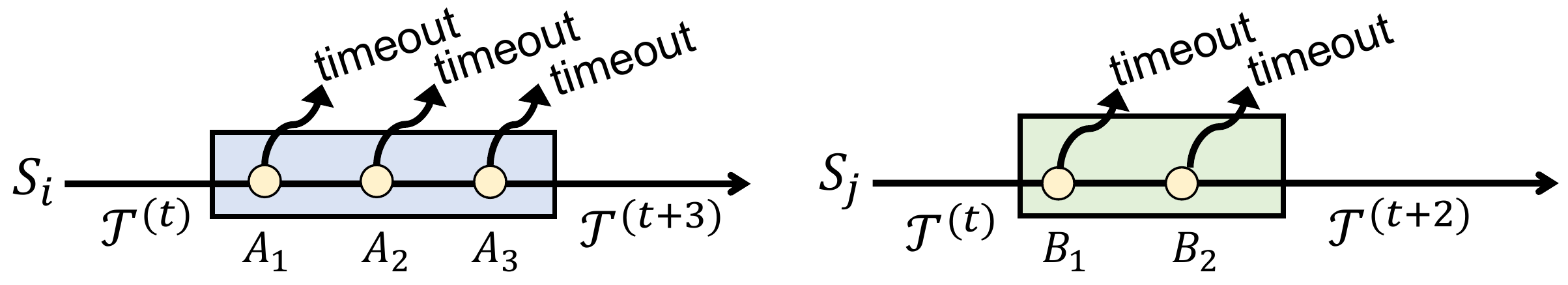}
    \caption{An \scap leader election can be translated into multiple consecutive Raft leader elections.}
    \label{fig:liveness_proof}
\end{figure}

\begin{lemma}
\label{lemma:indistinguishable}
An \scap election campaign and a Raft election campaign initialized by the same server is indistinguishable on other servers.
\end{lemma}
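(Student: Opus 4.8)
The plan is to read ``indistinguishable on other servers'' as a statement about local views: fix an arbitrary observer $S_j \neq S$ and compare the sequence of messages $S_j$ receives from the campaigning server $S$, together with the state transitions those messages induce, across the two protocols when $S$ launches a campaign from a common term $t$. I would show these local views coincide, and since $S_j$ is arbitrary the campaigns are indistinguishable to all other servers. The bridge to carry this out is Lemma~\ref{lemma:translation}: it already supplies, for each \scap campaign by $S$ with priority $\mathcal{P}$, a corresponding Raft execution of $\mathcal{P}$ consecutive elections inside a blackout window.

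The key step is to compare the observable message traces. In \scap, $S$ emits a single \texttt{RequestVotes} carrying term $t+\mathcal{P}$ (Eq.~\ref{eq:2}) and a summary of its current log (last index and term), which is frozen throughout the campaign because a candidate appends no entries. In the Raft counterpart furnished by Lemma~\ref{lemma:translation}, $S$ increments its term through $t+1,\dots,t+\mathcal{P}$ while disconnected, so the intermediate \texttt{RequestVotes} for terms $t+1,\dots,t+\mathcal{P}-1$ are never delivered to $S_j$; only after connectivity is restored at term $t+\mathcal{P}$ does $S_j$ receive a vote request, and it carries the same term and the same (frozen) log summary. Thus in both runs $S_j$ receives exactly one request with identical Raft-relevant fields, so its vote decision — the term check, the once-per-term check, and the log up-to-dateness check — is the same, and the induced transition of $S_j$ is the same.

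The one genuinely \scap-specific ingredient I would have to account for is the configuration clock \texttt{confClock}, the only field absent from Raft messages. I would argue it introduces no behavioral difference in the regime this lemma addresses: either by taking $S$ to carry a fresh configuration (the case relevant to the safety and liveness arguments that follow), or by invoking the monotonicity of \texttt{confClock} so that a candidate legitimately reaching term $t+\mathcal{P}$ never carries a stale clock relative to $S_j$, whence the extra ``never vote for a stale \texttt{confClock}'' guard never fires an additional rejection. With that settled, the two local views agree at $S_j$ and the lemma follows.

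I expect the main obstacle to be making precise, and honest about, the blackout-window delivery assumption: the argument needs the intermediate Raft requests to be genuinely invisible to $S_j$, which holds only because the blackout window of Lemma~\ref{lemma:translation} is a period of disconnection (message loss) that the asynchronous, non-Byzantine model permits. Consequently ``indistinguishable'' is really an \emph{existential} correspondence — for every \scap campaign there \emph{exists} a Raft campaign producing the identical observable trace — rather than a claim that every Raft run matches. I would state this direction explicitly so the lemma is not over-read, and so that its intended use, namely importing Raft's safety arguments into \scap, remains sound.
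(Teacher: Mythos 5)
Your proposal takes essentially the same route as the paper's proof: both invoke Lemma~\ref{lemma:translation} to replace the \scap campaign with a Raft execution of $\mathcal{P}$ consecutive elections inside a blackout window, and both conclude that since $S_j$ never observes the intermediate timeouts and the one delivered request carries the same Raft-relevant fields, $S_j$ cannot tell the two campaigns apart. Your two refinements --- explicitly discharging the \texttt{confClock} guard (which the paper's proof glosses over with the blanket claim that \scap ``does not change the rules or the parameters of election requests,'' a claim in tension with the stale-clock voting rule) and stating that the indistinguishability is an existential correspondence rather than a two-way one --- sharpen the paper's argument without departing from it.
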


\begin{proof}
Suppose a cluster runs Raft and \scap simultaneously where a server, say $S_i$, initiates an \scap leader election campaign and sends a request to another server, say $S_j$. We show that $S_j$ cannot distinguish if this request is from a Raft election campaign, denoted by $R_{S_i}$ or an \scap election campaign, denoted by $E_{S_i}$. That is,
$$\forall S_j \in \text{Raft}  \wedge \scap ,  E_{S_i} \mathop\sim^{S_j} R_{S_i}.$$

\scap does not change the rules or the parameters of election requests for electing a new leader in Raft. From Lemma~\ref{lemma:translation}, an \scap election can be implemented by multiple Raft elections, and $S_j$ is not aware of the length of the blackout window on $S_i$. Therefore, when $S_j$ receives the \scap election campaign request from $S_i$, $S_j$ is unable to distinguish if the request from an \scap election or a Raft election at the end of the blackout window, i.e., for $S_j$, $E_{S_i} \sim R_{S_i}$.
\end{proof}

\begin{theorem}[Safety]
Nonfaulty servers do not decide on conflicting values.
\end{theorem}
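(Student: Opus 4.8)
The plan is to prove Safety by reduction to Raft, in the same spirit as the preceding Validity argument: since Raft is known to guarantee that nonfaulty servers never decide on conflicting values, it suffices to show that every \scap execution corresponds to a legal Raft execution with the same committed values, so any safety violation in \scap would induce one in Raft. The two tools for this reduction are already in place. Lemma~\ref{lemma:translation} rewrites an \scap election with priority $\mathcal{P}$ as a run of $\mathcal{P}$ consecutive Raft elections inside a blackout window, and Lemma~\ref{lemma:indistinguishable} guarantees that a voter cannot tell an \scap request apart from a Raft request. Together they let me embed each \scap run into a Raft run without altering any voting or commit decision.

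First I would isolate the invariants on which Raft's safety rests --- principally \emph{Election Safety} (at most one leader per term) and \emph{Leader Completeness} (a leader's log contains all committed entries) --- and argue that \scap preserves each. By Lemma~\ref{lemma:indistinguishable}, every follower applies exactly Raft's voting predicate (term freshness, at-most-one-vote-per-term, and log up-to-dateness); \scap adds only the \texttt{confClock} condition, which is a further \emph{necessary} condition for granting a vote. Because this condition can only withhold votes, never manufacture them, it cannot cause a second candidate to reach a majority in any term, so Election Safety is untouched. Next, by Lemma~\ref{lemma:translation}, \scap's non-consecutive terms embed monotonically into Raft's term sequence, so terms still totally order elections and still forbid two leaders from sharing a term; combined with the unchanged log-comparison rule, Leader Completeness carries over verbatim. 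Since the changes \scap makes are confined to election and are isolated from log replication (as already invoked in the Validity proof), the log-level invariants --- Log Matching and State Machine Safety --- are inherited directly from Raft.

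The hard part will be the non-atomic configuration rearrangement: a server that crashes and recovers may restore a stale configuration and thereby share a priority with the server that inherited it, so two candidates could land in the \emph{same} term --- exactly the situation \scap was designed to avoid. I would argue that this threatens only liveness, not safety. Even with a duplicated priority the two candidates occupy a common term, and Raft's one-vote-per-term-plus-majority rule already guarantees that at most one of them can be elected there, while the \texttt{confClock} check additionally denies votes to the stale candidate. Hence no \scap execution, including those with recoveries and stale configurations, can place two leaders in a single effective term. Assembling these observations, every \scap run maps to a Raft run with the same committed log, and Raft's safety theorem then yields that nonfaulty \scap servers never decide on conflicting values.
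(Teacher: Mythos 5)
Your proposal is correct and follows the paper's overall strategy---reduce \scap safety to Raft safety via Lemma~\ref{lemma:indistinguishable} (indistinguishability of election requests) and the isolation of leader election from log replication---but it executes that reduction with a genuinely finer decomposition than the paper does. The paper's proof is an explicit sketch: it asserts that a leader elected by \scap ``must be able to be elected by Raft with a specific initial configuration'' and stops there. You instead verify the individual Raft invariants on which safety rests (Election Safety, Leader Completeness, and the log-level invariants), and your key observation that the \texttt{confClock} check is a purely \emph{restrictive} addition to the voting predicate---it can withhold votes but never manufacture them, so it cannot create a second majority in any term---is a sharper justification than anything in the paper's text. Most notably, you confront the corner case the paper's safety argument silently skips: non-atomic configuration rearrangement can leave a recovered server holding a duplicated priority, so two campaigns may collide in the same term, which is exactly the scenario \scap's design is meant to prevent. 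Your resolution---that such collisions threaten only liveness, since Raft's one-vote-per-term rule and majority intersection already forbid two leaders in a single term---is correct and closes a gap that the paper leaves to the reader. In short, both arguments rest on the same two lemmas, but yours is strictly more complete; the paper's buys brevity, yours buys an actual case analysis of the failure mode introduced by \scap's own mechanism.
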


\begin{proof}[Proof (Sketch)]
The safety property also holds in Raft~\cite{ongaro2014search}. \scap leader election is isolated from the log replication phase. Thus, under normal operation, \scap maintains the safety property. In addition, during a view-change/leader election phase, from Lemma~\ref{lemma:indistinguishable}, \scap and Raft election campaigns are indistinguishable, and servers follow the same rule to determine a new leader. Thus, a new leader elected by \scap must be able to be elected by Raft with a specific initial configuration. A leader in \scap maintains its correctness and no committed values can be overwritten. Therefore, \scap maintains Raft's safety property.
\end{proof}

\begin{lemma}
\label{lemma:1}
If any two servers are in the same configuration clock, they must possess different configurations.
\end{lemma}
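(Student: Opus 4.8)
The plan is to prove the invariant by induction on the configuration-clock value $k$, exploiting the fact that every rearrangement performed by PPF is a \emph{permutation} of the fixed pool of priorities $\{1,\dots,n\}$. Concretely, I would maintain the invariant $I(k)$: \textbf{every server whose \texttt{confClock} equals $k$ holds a configuration drawn from the permutation issued in rearrangement round $k$}. Since that round's assignment is a bijection between servers and priorities, the configurations of any two servers at clock $k$ are automatically distinct, which is exactly the claim.

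For the base case, recall that SCA assigns each server $S_i$ the unique priority $\mathcal{P}_i = i$ equal to its server ID, all starting at a common initial configuration clock (say $k=0$). Because server IDs are pairwise distinct, $i \neq j \Rightarrow \mathcal{P}_i \neq \mathcal{P}_j$, so any two servers sharing clock $0$ already possess different configurations, establishing $I(0)$.

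For the inductive step, I would argue that each invocation of PPF at round $k+1$ takes the $n$ existing priorities and redistributes them as a permutation: the leader sorts servers by log responsiveness and reassigns the \emph{same} pool of priorities, so the map from servers to priorities is a bijection (no priority is created or duplicated). A server sets its \texttt{confClock} to $k+1$ precisely when it adopts the configuration delivered by the \texttt{newConfig} field of that round's heartbeat (Listing~1). Hence every server reporting clock $k+1$ obtained its priority from the round-$(k{+}1)$ bijection, and two such servers must therefore hold distinct priorities, giving $I(k{+}1)$.

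The main obstacle is handling crash/recovery and partial message delivery, where a stale server carries an \emph{old} configuration that could in principle collide with a fresh one. I would resolve this by showing that \texttt{confClock} faithfully tags the round in which a configuration was issued and increments monotonically and one-to-one with rearrangement rounds (Section~\ref{sec:probing_patrol}). A server that misses the round-$(k{+}1)$ heartbeat retains its previous pair $\pi^{(\mathcal{P}, k')}$ with $k' \le k$, so it is \emph{not} at clock $k+1$ and cannot be compared against a freshly updated server at all --- this is exactly the $S_4$ scenario of Figure~\ref{fig:patrol-b}. The delicate part is precisely this clock bookkeeping: confirming that a single clock value never aggregates configurations from two different permutations, since that monotone, injective correspondence between clock values and rounds is what makes each hypothesis $I(k)$ speak about the configurations of one bijection rather than several. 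Together with the bijectivity of each round, this closes the induction.
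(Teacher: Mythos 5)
Your proof is correct and takes essentially the same approach as the paper's: uniqueness of the initial SCA assignment, the fact that each PPF rearrangement round never issues the same priority to two servers (your bijection argument), and the observation that servers missing a round retain a stale \texttt{confClock} and therefore cannot collide with freshly updated servers at the current clock value. Your explicit induction on $k$ simply makes rigorous what the paper's three-sentence proof asserts informally.
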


\begin{proof}
Initially, SCA assigns each server a unique configuration based on the server ID. Under normal operation, PPF atomically rearranges configurations, based on server log responsiveness, at any given configuration clock. If the leader fails, the assignment task may become incomplete, but it never assigns two servers with the same configuration.
\end{proof}

\begin{lemma}
\label{lemma:2}
If any two servers possess same configuration, they must be in different configuration clocks. 
\end{lemma}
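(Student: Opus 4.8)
The plan is to recognize that Lemma~\ref{lemma:2} is precisely the contrapositive of Lemma~\ref{lemma:1}, and to make this explicit by first fixing what ``configuration'' denotes. Following the notation $\pi^{(\mathcal{P}_i,k)}$, I would treat a \emph{configuration} as the priority/election-timeout pair $\pi^{(\mathcal{P}_i)}$ --- recalling that the timeout is a deterministic function of the priority via Eq.~\ref{eq:1}, so a configuration is pinned down by its priority alone --- while the configuration clock $k$ is the separate logical timestamp attached to each assignment. Under this reading ``same configuration'' means equal priority, and the two lemmas become the two directions of a single contrapositive pair.

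First I would restate Lemma~\ref{lemma:1} in functional form: at any fixed configuration clock $k$, PPF distributes the priorities $\{1,\dots,n\}$ to the $n$ servers as a bijection (a permutation), so the priorities held at clock $k$ are pairwise distinct, and no two servers share a configuration at a single clock. Taking the contrapositive of this statement yields Lemma~\ref{lemma:2} directly: if two servers $S_i$ and $S_j$ carry the same priority $\mathcal{P}$, they cannot both sit at a common clock $k$ (that would contradict distinctness at $k$), hence their configuration clocks must differ.

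To ground the argument in the protocol, I would exhibit the canonical way duplicate configurations arise across clocks: a server holding priority $\mathcal{P}$ at clock $k$ crashes, PPF reassigns $\mathcal{P}$ to a fresh server at clock $k{+}1$, and upon recovery the stale server restores $\pi^{(\mathcal{P})}$ while still carrying $\texttt{confClock}=k$. Two servers then share the configuration $\mathcal{P}$, but necessarily at clocks $k$ and $k{+}1$; this is exactly the case the $\texttt{confClock}$ field was introduced to disambiguate, so it confirms the conclusion rather than threatening it.

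I do not expect a genuine mathematical obstacle: given Lemma~\ref{lemma:1}, the result is immediate by contraposition. The only point requiring care --- and the step I would treat as the crux --- is the definitional one: being explicit that ``configuration'' excludes the clock field, since the struct in Listing~1 bundles $\texttt{confClock}$ together with the priority and timer. If one instead read ``configuration'' as the full struct, the premise of Lemma~\ref{lemma:2} would be unsatisfiable at a shared clock (again by Lemma~\ref{lemma:1}) and the statement would hold vacuously; I would mention this reading only to show the lemma is robust to the ambiguity.
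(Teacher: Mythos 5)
Your proposal is correct and takes essentially the same approach as the paper: the paper also derives Lemma~\ref{lemma:2} from Lemma~\ref{lemma:1} by contradiction (assuming two servers share a configuration \emph{and} a configuration clock $k$, so the leader would have assigned the same configuration twice at clock $k$), which is precisely your contraposition argument spelled out as a contradiction. Your added care about reading ``configuration'' as the priority alone, excluding the \texttt{confClock} field, is a sensible clarification of the paper's implicit convention but not a different proof route.
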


\begin{proof}
We prove this lemma by contradiction. We claim that servers assigned with the same configuration have the same \texttt{confClock}, say $k$. Then, at configuration clock $k$, a leader must have assigned the same configuration to more than one server, which contradicts Lemma~\ref{lemma:1}. 
\end{proof}

Since \scap tolerates $f$ benign faults, from Lemma~\ref{lemma:2}, there are at most $f$ servers that are in different configuration clocks while possessing the same configuration. Suppose in clock $k$, a server $S_1$ has a configuration $\pi^{(\mathcal{P_*}, k)}$. In clock $k+1$, server $S_1$ crashes, and $\pi^{(\mathcal{P_*})}$ is assigned to $S_2$; thus, $S_2$'s configuration is $\pi^{(\mathcal{P_*}, k+1)}$. Then, $S_2$ crashes and $\pi^{(\mathcal{P_*})}$ is assigned to $S_3$. In this way, since \scap tolerates $f$ failures, $\pi^{(\mathcal{P_*})}$ can be iteratively assigned to $S_i$ in clock $k+i-1$ where $i \leq f$.

\begin{theorem}[Configuration uniqueness]
\label{theorem:config-uniqueness}
No two nonfaulty servers possess the same prioritized configuration.
\end{theorem}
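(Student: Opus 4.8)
The plan is to prove the statement by contradiction, combining the two preceding lemmas with the crash--recovery chain sketched in the paragraph following Lemma~\ref{lemma:2}. A prioritized configuration is the pair consisting of a priority and a configuration clock, written $\pi^{(\mathcal{P}, k)}$, so two servers \emph{possess the same prioritized configuration} precisely when both the priority $\mathcal{P}$ and the clock $k$ coincide. I would therefore suppose, for contradiction, that two nonfaulty servers $S_a$ and $S_b$ both hold $\pi^{(\mathcal{P_*}, k)}$ for some common $\mathcal{P_*}$ and $k$.

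First I would split on the configuration clock. The two servers agree on $k$ by assumption, so they are in the same configuration clock; Lemma~\ref{lemma:1} then forces their priorities to differ, contradicting $\mathcal{P}_a = \mathcal{P}_b = \mathcal{P_*}$. This case-split already delivers the contradiction directly, but to make the role of \emph{nonfaulty} explicit I would also trace how a shared priority can arise at all. By Lemma~\ref{lemma:2}, any two servers that ever hold priority $\mathcal{P_*}$ must sit at distinct configuration clocks, and the chain $S_1 \to S_2 \to \cdots \to S_i$ from the discussion shows that each additional holder of $\mathcal{P_*}$ is produced only by a crash of the previous holder followed by a reassignment at a strictly higher clock $k+1, k+2, \ldots$. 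Hence a duplicate priority is always accompanied by a strictly increasing clock, and every stale duplicate $\pi^{(\mathcal{P_*}, k)}$ is the residue of a faulty (crashed) server that has not yet resynchronized.

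From here I would close the argument. At any fixed configuration clock $k$ the leader's atomic rearrangement assigns $\mathcal{P_*}$ to exactly one server, so $\pi^{(\mathcal{P_*}, k)}$ is issued to a single server; any other server carrying priority $\mathcal{P_*}$ necessarily carries a different clock and is, by the chain above, a crashed-and-recovered server holding a stale \texttt{confClock}. Such a server is faulty with respect to the current configuration, and the $f$-bound guarantees there are at most $f$ of them. Restricting attention to nonfaulty servers therefore leaves at most the single current holder of $\pi^{(\mathcal{P_*}, k)}$, which contradicts the assumed existence of two such nonfaulty servers.

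I expect the main obstacle to be pinning down the notion of \emph{nonfaulty} so that it cleanly excludes the stale-configuration holders: a recovered server is operational yet carries an outdated clock, so the argument must lean on the configuration clock (rather than mere liveness) to separate the unique fresh holder from the $\le f$ stale duplicates. Making this dependence on \texttt{confClock} precise --- and confirming that the atomicity claim underlying Lemma~\ref{lemma:1} indeed rules out two servers receiving the same $(\mathcal{P_*}, k)$ even when the leader fails mid-assignment --- is where the real work lies; the clock case-split itself is immediate from Lemma~\ref{lemma:1}.
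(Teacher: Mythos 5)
Your proof is correct and takes essentially the same route as the paper's: both combine Lemma~\ref{lemma:1} (same clock implies distinct configurations) and Lemma~\ref{lemma:2} (same configuration implies distinct clocks) with the monotone growth of configuration clocks to conclude that nonfaulty servers are always scattered into distinct priorities. Your version --- in particular the explicit contradiction structure and the discussion of how the nonfaulty hypothesis excludes stale \texttt{confClock} holders arising from the crash--recovery chain --- is in fact considerably more detailed than the paper's brief sketch.
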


\begin{proof}
Lemma~\ref{lemma:1} guarantees that, at any given time, candidates are prioritized by configurations. From Lemma~\ref{lemma:2}, configurations are prioritized by their logical clocks. Since configuration clocks grow monotonically along with physical time, the rearrangement of configurations always atomically scatters servers into different priorities.
\end{proof}

Due to the problem of competing candidates, Raft ensures liveness if the system ultimately elects a leader. That is, \emph{every nonfaulty server eventually decides a value.} However, with bounded messaging delay, Raft cannot guarantee a lower bound of the eventuality; competing candidates may often occur if timer timeouts are not ideally chosen. In contrast, equipped with SCA and PPF, \scap ensures that a new leader can be elected by at most $f+1$ leader elections in the worst case. If the highest-priority candidate does not fail, it will become the new leader and terminates leader election in one election campaign.

\begin{lemma}
\label{lemma:one-campaign}
\scap terminates a leader election phase in one campaign with nonfaulty candidates.
\end{lemma}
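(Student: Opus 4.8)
The plan is to single out the highest-priority nonfaulty candidate as the inevitable winner and to show that every other server is driven to vote for it inside the one campaign. I read ``nonfaulty candidates'' as the assumption that no candidate crashes during the election, so that every candidate's configuration remains fresh. First I would invoke Theorem~\ref{theorem:config-uniqueness}: the competing candidates hold pairwise distinct priorities. By Eq.~\ref{eq:2}, a candidate of priority $\mathcal{P}$ that times out from the common term $t$ advances to term $t+\mathcal{P}$, so distinct priorities yield pairwise distinct campaign terms. Let $S^*$ be the candidate holding the highest priority $\mathcal{P}^*$; then $S^*$'s campaign occupies the strictly largest term $t+\mathcal{P}^*$ among all concurrent campaigns.

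Next I would establish that $S^*$ is electable by every nonfaulty server. The probing patrol function assigns higher-priority configurations to more up-to-date servers, and because the failed leader appends no further entries, the ranking PPF computed at the last heartbeat still reflects the true ordering of logs at election time; hence the holder of $\mathcal{P}^*$ owns a most up-to-date log together with a current configuration clock. Consequently $S^*$ satisfies all three voting requirements plus the \texttt{confClock} freshness rule against any voter: its term is the highest, its log is at least as up-to-date as any follower's, and its configuration clock is not stale. Then, upon receiving $S^*$'s \texttt{RequestVotes}, each server raises its term to $t+\mathcal{P}^*$ by Eq.~\ref{eq:3}, after which every lower-priority request carries a smaller term and is rejected under the first voting requirement, so no competitor can accumulate votes.

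The crux---and the step I expect to be the main obstacle---is arguing that a vote already cast for a lower-priority candidate does not bar a server from voting for $S^*$, which is exactly what neutralizes Raft's split-vote behaviour. Here I would exploit that campaigns reside in distinct terms: a server that spent its single per-term vote on a candidate in term $t+k$ with $k<\mathcal{P}^*$ has not yet voted in the strictly larger term $t+\mathcal{P}^*$, so when $S^*$'s request arrives it updates its term and casts a fresh vote for $S^*$. No vote is ever permanently lost to a losing candidate, which makes the argument order-independent: regardless of how message arrivals interleave, every nonfaulty server ultimately grants its vote in term $t+\mathcal{P}^*$. Since \scap tolerates at most $f$ faults under a majority quorum, the nonfaulty servers form at least a majority, so $S^*$ collects enough votes and is elected within the single campaign, with the lower-priority campaigns merely converging to $S^*$'s term and stepping down. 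This closes the lemma.
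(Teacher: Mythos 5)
Your proof is correct and takes essentially the same route as the paper's: invoke Theorem~\ref{theorem:config-uniqueness} so that candidates hold pairwise distinct priorities, conclude via Eq.~\ref{eq:2} that their campaigns occupy distinct terms, and then observe that the highest-term nonfaulty candidate collects the votes and terminates the election in a single campaign. Your extra elaboration---that per-term votes spent in lower terms do not block a fresh vote in the strictly higher term, and that PPF makes the highest-priority candidate electable---spells out what the paper's terse proof leaves implicit (and states separately in its ``Declining Competing Candidates'' discussion), but the underlying argument is the same.
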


\begin{proof}
With Theorem~\ref{theorem:config-uniqueness}, in the phase of leader election, no two candidates possess the same prioritized configurations. Thus, at most one legit election campaign takes place in a given term. With nonfaulty candidates, the candidate that increases its term to the highest will be elected, which subsequently terminates the leader election.
\end{proof}

\begin{theorem}[Strong liveness]
Every nonfaulty server decides a value after at most $f+1$ leader elections.
\end{theorem}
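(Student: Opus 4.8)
The plan is to combine Lemma~\ref{lemma:one-campaign} with a counting argument over the fault budget $f$. The key observation is that Lemma~\ref{lemma:one-campaign} already settles the favourable case: whenever the highest-priority participating candidate is nonfaulty, Theorem~\ref{theorem:config-uniqueness} guarantees it holds a strictly unique top priority, so its campaign reaches the strictly highest term (Eq.~\ref{eq:2}) and, by Lemma~\ref{lemma:one-campaign}, the phase terminates in a single campaign. Thus the only way to require more than one election is for the would-be winner to crash, and the proof reduces to charging each such crash against the fault budget and showing that at most $f$ crashes can occur before a nonfaulty candidate necessarily prevails.

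First I would fix the setting. \scap tolerates $f$ benign faults, so the cluster has $n \geq 2f+1$ servers and therefore at least $f+1$ nonfaulty servers at all times; in particular a nonfaulty candidate is always available to stand for election. Next I would argue that an election failing to produce a stable leader is attributable to exactly one server crash. By Theorem~\ref{theorem:config-uniqueness}, at any configuration clock the participating candidates carry distinct priorities, so there is a unique candidate whose term growth is maximal; all lower-term campaigns are superseded once followers synchronize upward via Eq.~\ref{eq:3}. Hence an election can fail to terminate only if this uniquely determined top candidate crashes before securing a majority and establishing leadership through heartbeats, a benign failure that consumes one unit of the budget.

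I would then iterate. Once the top candidate of the current election crashes, the surviving servers re-enter leader election with one fewer fault remaining. Because PPF does not run in the absence of a leader, the priorities fixed by the last stable assignment persist, and by Theorem~\ref{theorem:config-uniqueness} they remain distinct; hence the next election again has a uniquely determined highest-priority candidate. Since at most $f$ servers are faulty, at most $f$ successive top candidates can crash, so the $(f{+}1)$-th election is guaranteed a nonfaulty highest-priority candidate, which by Lemma~\ref{lemma:one-campaign} wins in a single campaign. Consequently every nonfaulty server decides a value after at most $f+1$ leader elections.

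The main obstacle I anticipate is the charging step: rigorously establishing that each additional leader election is injectively attributable to a distinct crash, so the total cannot exceed $f+1$. Two issues need care. First, I must rule out that a single crash triggers two or more re-elections; this is handled by the uniqueness of the top candidate per term, which prevents split votes from reappearing and thereby bounds the re-elections caused by one failure at one. Second, I must rule out that message loss or stale configurations, rather than a genuine crash, could spawn extra rounds; for this I would invoke the configuration-clock rule that servers never vote for candidates with a stale \texttt{confClock}, so a recovered server carrying an outdated configuration cannot win an election, and hence cannot inflate the count, without first resynchronizing with the leader. Together these leave the crash-counting bound of $f+1$ intact.
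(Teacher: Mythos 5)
Your proposal is correct and takes essentially the same route as the paper: the paper's own proof also charges the worst case to $f$ successive failures of the candidate holding the highest priority among nonfaulty servers, and then invokes Lemma~\ref{lemma:one-campaign} to conclude that once a nonfaulty top candidate exists, a single campaign suffices, giving the $f+1$ bound. Your version merely spells out the details the paper leaves implicit (the injective crash-charging step, the persistence of distinct priorities between elections, and the \texttt{confClock} rule excluding stale-configuration interference), which strengthens rather than alters the argument.
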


\begin{proof}
\scap and Raft both tolerate $f$ non-Byzantine failures with a total of $2f+1$ servers. With a correct leader, nonfaulty servers decide a value in two rounds of heartbeats under normal operation. When the leader fails, normal operation resumes after a new leader is elected. In \scap, if $f$ failures occur on every candidate that has the highest-priority among nonfaulty ones, \scap must wait for $f+1$ election campaigns before a correct candidate appears. With Lemma~\ref{lemma:one-campaign}, \scap elects a new leader in one leader election campaign when candidates are nonfaulty. Therefore, normal operation resumes after at most $f+1$ leader elections.
\end{proof}

\begin{theorem}[Complexity]
\label{theorem:complexity}
\scap leader election has a message transmission complexity of $\mathcal{O}(n^2)$.
\end{theorem}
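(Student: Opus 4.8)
The plan is to bound the number of messages generated during an \scap leader election and to show it is $\mathcal{O}(n^2)$. The only messages exchanged during the election phase itself are the \texttt{RequestVotes RPCs} broadcast by candidates and the corresponding vote replies; the PPF configuration rearrangement is piggybacked on the leader's normal-operation heartbeats (see Listing~1) and therefore contributes no extra messages while a leader is absent. I would first fix a single election campaign, then bound concurrent candidacy, and finally account for repeated campaigns.

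First I would count a single campaign. When a server becomes a candidate, it broadcasts \texttt{RequestVotes} to the other $n-1$ servers and receives up to $n-1$ replies, for $2(n-1) = \mathcal{O}(n)$ messages; the added configuration clock travels inside each existing request, so it incurs no asymptotic overhead relative to Raft. Since at most $n$ servers can be candidates at once, a single wave of concurrent campaigns contributes $\mathcal{O}(n) \cdot \mathcal{O}(n) = \mathcal{O}(n^2)$ messages.

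Next I would account for repeated campaigns using the Strong liveness theorem, which caps the number of campaigns at $f+1$, and Lemma~\ref{lemma:one-campaign}, which guarantees that nonfaulty candidates finish the election in a single campaign. The key observation is that the $f+1$-campaign worst case arises only from successive failures of the highest-priority surviving candidate, so---by Theorem~\ref{theorem:config-uniqueness} and Lemma~\ref{lemma:one-campaign}---each of these campaigns is effectively driven by one candidate rather than a fresh $\mathcal{O}(n)$-candidate wave. With $f+1 = \mathcal{O}(n)$ campaigns each costing $\mathcal{O}(n)$, the cumulative cost stays $\mathcal{O}(n^2)$.

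The main obstacle will be combining concurrent candidacy with repeated campaigns without overcounting. A careless bound---$\mathcal{O}(n)$ concurrent candidates, each sending $\mathcal{O}(n)$ messages, repeated over $f+1 = \mathcal{O}(n)$ rounds---would inflate the estimate to $\mathcal{O}(n^3)$. This is exactly where prioritization pays off: configuration uniqueness serializes effective candidacy and Lemma~\ref{lemma:one-campaign} collapses each nonfaulty round to a single campaign, so the true worst-case trajectory is a sequence of at most $f+1$ single-candidate campaigns rather than $f+1$ full split-vote waves. I would make this serialization argument explicit to justify that the two sources of growth do not multiply, closing the bound at $\mathcal{O}(n^2)$.
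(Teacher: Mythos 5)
Your first counting step is exactly the paper's proof: in a single election phase, each of up to $n$ concurrent candidates broadcasts \texttt{RequestVotes} to the other $n-1$ servers and collects at most $n-1$ replies, giving $\mathcal{O}(n^2)$ message transmissions. The theorem is scoped to one election --- the paper's proof speaks only of what happens ``during the election'' --- so this part alone establishes the statement, and your best-case $\mathcal{O}(n)$ observation likewise mirrors the remark the paper places immediately after the theorem.

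The extension you add, bounding the cumulative cost over the $f+1$ elections of the strong-liveness theorem, is both unnecessary for the statement and unsound in its key step. You claim that Theorem~\ref{theorem:config-uniqueness} and Lemma~\ref{lemma:one-campaign} ``serialize effective candidacy,'' so that each repeated election is driven by a single candidate. Those results guarantee only that priorities are unique and that the election phase terminates in one campaign; they do not prevent several servers from timing out and broadcasting concurrently within that phase. Indeed, the paper's own illustration (Figure~\ref{fig:scap_leader_election}) shows three concurrent campaigns inside a single \scap election, all of which send messages --- the point is that one of them dominates by term, not that the others stay silent. Consequently, each of the $f+1$ elections can still cost $\Theta(n^2)$ messages in the worst case, and under your own accounting the cumulative figure would be $\mathcal{O}(n^3)$, not $\mathcal{O}(n^2)$. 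The clean resolution is the one the paper adopts: state and prove the bound per election. If you want a cumulative bound across repeated elections, you must either accept the extra factor of $f+1$ or prove a genuinely stronger concurrency claim about later rounds than the paper's lemmas provide.
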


\begin{proof}
Upon leader failures, every server is able to start an election campaign after its timer expires and subsequently broadcasts a request to all the other servers. If a candidate is qualified for the next leadership, other servers send a vote to the candidate. Therefore, during the election, the message transmission complexity is $\mathcal{O}(n^2)$.
\end{proof}

\begin{figure*}[t]
\centering
\includegraphics[width=7.1in]{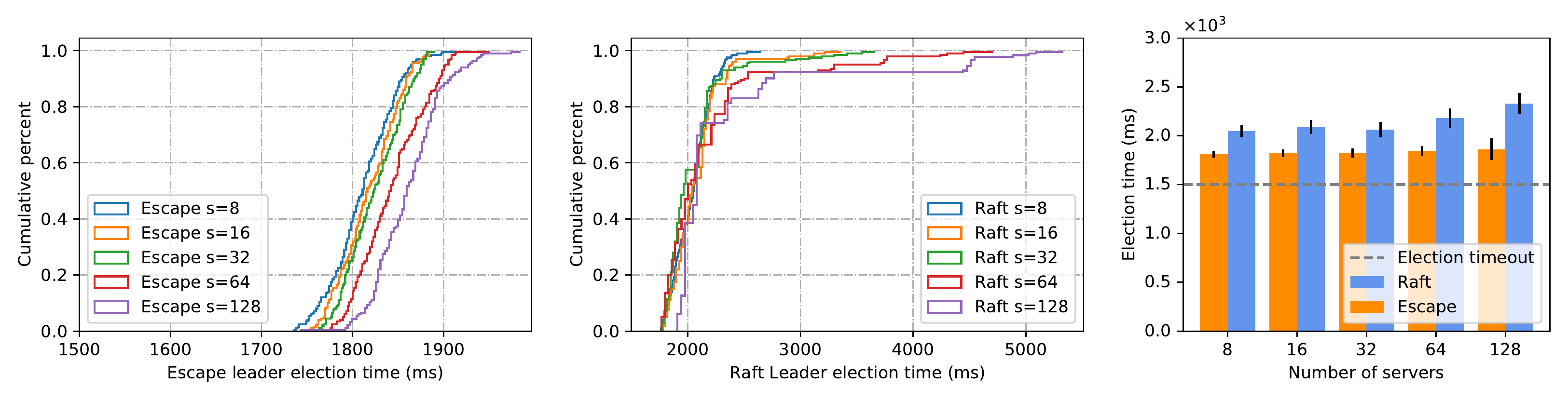}
\caption{Leader election (LE) time comparison of \scap and Raft at increasing scales. The first two figures indicate the cumulative percent of distribution of LE time for $1000$ runs of each scale, and the last figure compares the average LE time of both protocols.}
\label{fig:le_time_normal}
\end{figure*}

Additionally, the best case, where only one election campaign is initiated, has a message transmission complexity of $\mathcal{O}(n)$. Compared with Raft, \scap\ has a higher probability to achieve the best case. The highest priority and the shortest election timeout is assigned to the most up-to-date candidate, who presumably will firstly detect leader failure and become the next leader. 




\section{Evaluation}
\label{sec:evaluation}
In this section, we show experimental comparisons majorly between \scap and Raft. \scap aims to avoid split votes and reduce the leader election time, so the comparison was conducted for leader election under various scenarios including crash and omission failures.

\subsection{Experimental Setup}

We implemented \scap and Raft prototypes using the Golang programming language and deployed the prototypes on 4, 8, 16, 32, 64 and 128 VM instances on Compute Canada Cloud. Instances are located in the same data center; the raw network latency between two VMs is less than $2$ ms, and the network bandwidth is around $400$ Megabytes/second. We simulated a geo-distributed setup by using \texttt{NetEm} to implement a higher network latency that uniformly distributes from $100$ to $200$~ms. Each instance includes a machine with two 2.40 GHz Intel Core processors (Skylake) with a cache size of 16 MB, 8 GB of RAM, and 20 GB of disk space running on Ubuntu 18.04.1 LTS.

\subsection{Election Time under Leader Failures}
\label{sec:failure-free-le}
We first compared the leader election time for \scap and Raft under leader failures. In this case, we repeatedly crashed the leader of a cluster of $8$, $16$, $32$, $64$, and $128$ servers for $1000$ runs of leader election at each scale. In the Raft cluster, election timeouts were set to $1500$-$3000$~ms, which is the value range recommended by Raft for our network latency~\cite{ongaro2014search}. In \scap clusters, the $baseTime$ of election timeouts for servers was set to $1500$~ms with $k{=}500$ for Equation~\ref{eq:1}. 
In practice, to avoid simultaneous timeouts among servers, $k$ can be set $\times2$ higher than the network latency. Thus, \scap allows the ``potential leader'' to complete its election  before other servers trigger timeouts. This setting assists the occurrence of the best case in leader election (discussed after Theorem~\ref{theorem:complexity}). 
The leader election time is recorded including the detection of the leader crash and the election of a new leader. In each run, candidates need to collect votes from a majority of servers in their terms to claim leadership. 
For example, in an $8$-server cluster, the quorum size is $5$; after the leader crashes, a new leader is the candidate that has collected $4$ votes from the remaining servers (candidates vote only for themselves).

Figure~\ref{fig:le_time_normal} shows the evaluation result for \scap and Raft operating in clusters at increasing scales (denoted by $s$). The cumulative percent of the leader election time distribution for both protocols are presented in the first two figures. In \scap, all the election campaigns were completed within $2000$~ms, with no occurrence of split votes; in contrast, when $s \geq 32$, less than $40\%$ of Raft's election campaigns were completed within $2000$~ms. In the $128$-server cluster, more than $17\%$ of election campaigns experienced split votes, and their election time exceed $4500$~ms. The last figure shows the comparison of averaged election times. Compared with Raft, \scap shortens the leader election time by $11.6\%$ and $21.3\%$ at sizes of $8$ and $128$ servers, respectively.

\begin{figure*}[t]
\centering
\includegraphics[width=7.1in]{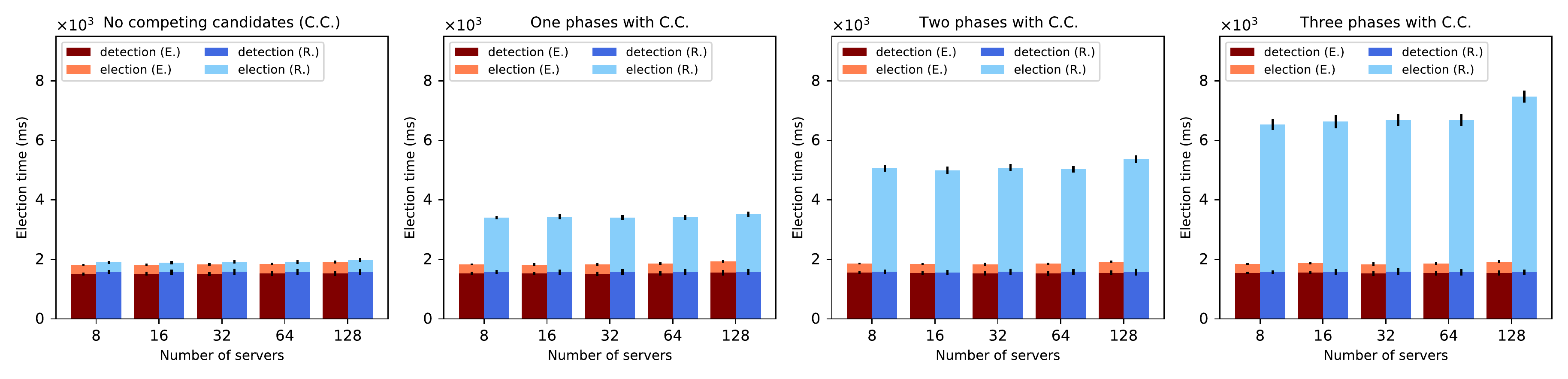}
\caption{Leader election time performance in zero, one, two, and three phases with competing candidates (C.C.) at five varying scales. The detection period is recorded between when a leader crashes and a candidate appears. The election period is recorded between when a candidate starts an election campaign and a new leader is elected. \scap benefits from its dynamic configuration assignments and performs leader election in a single campaign regardless of the number of competing candidates.}
\label{fig:three_phases_competing_candidates}
\end{figure*}

\subsection{Election Time in Multiple Phases}
From Section~\ref{sec:failure-free-le}, the evaluation result shows that \scap outperforms Raft in leader election under single leader failures. Because \scap does not encounter split votes in a given term, it reduces the leader election on average. In order to observe the performance for both protocols when configurations lead to multiple phases with competing candidates, we evaluated the leader election time where five clusters of various scales elect a new leader in zero, one, two, and three phases with competing candidates (shown in Figure~\ref{fig:three_phases_competing_candidates}).

Raft witnesses a surge in leader election time when split votes repeat. Both protocols exhibit similar performance without competing candidates: election converges in $1812$~ms (\scap $s{=}8$) to $1976$~ms (Raft $s{=}128$) for clusters at different scales. However, Raft suffers from a provisional livelock when competing candidates emerge. Although Raft and \scap take similar time to detect leader failures, competing candidates impede Raft from completing the current election. The livelock lasts for approximately the number of phases with competing candidates $\times$ the election timeout. With three phases of competing candidates, Raft takes approximately $6535$~ms to elect a leader in a cluster of $8$ servers, and $7473$~ms in a cluster of $128$ servers, nearly four times as long as \scap ($1924$~ms at $s{=}128$).
    
On the contrary, \scap completes all election campaigns within $2000$~ms throughout different evaluations, regardless of the number of phases with competing candidates. \scap no longer elects a leader based on the success that a candidate collects votes faster than its opponents and forms a majority. The highest-termed candidate finally supersedes the other election campaigns and becomes the new leader. In a $128$-server cluster, compared with Raft, \scap reduces the election time by $44.9\%$, $64.2\%$, and $74.3\%$ under one, two, and three phases with competing candidates, respectively.

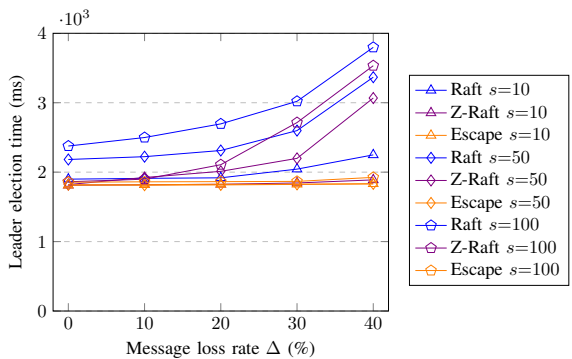
\begin{figure}[b!]
    \centering
    \begin{adjustbox}{width=0.85\linewidth}
    \begin{tikzpicture}
    \pgfplotsset{width=8cm, compat=1.9, scaled y ticks=base 10:-3,}
    
    \begin{axis}[
        title={},
        xlabel={Message loss rate $\Delta$ (\%)},
        ylabel={Leader election time (ms)},
        xmin=-2, xmax=42,
        ymin=0, ymax=4000,
        xtick={0, 10, 20, 30, 40},
        ylabel near ticks,
        legend pos=south west,
        legend columns=1, 
        legend style={font=\normalsize},
        ymajorgrids=true,
        grid style=dashed,
        mark size=3pt,
        legend cell align={left},
        legend style={at={(1.3, 0.85)}, anchor=north},
    ]
    
    \definecolor{patriarch}{rgb}{0.5, 0.0, 0.5}
    
    \addplot[color=blue, 
    mark=triangle,
    ]
    table{data/raft10mloss.data};
    \addlegendentry{Raft $s{=}10$}
    
    \addplot[color=patriarch, 
    mark=triangle, 
    ]
    table{data/z10mloss.data};
    \addlegendentry{Z-Raft $s{=}10$}
    
    \addplot[color=orange,
    mark=triangle, 
    ]
    table{data/escape10mloss.data};
    \addlegendentry{Escape $s{=}10$}

    \addplot[color=blue, 
    mark=diamond,
    ]
    table{data/raft50mloss.data};
    \addlegendentry{Raft $s{=}50$}
    
    \addplot[color=patriarch, 
    mark=diamond, 
    ]
    table{data/z50mloss.data};
    \addlegendentry{Z-Raft $s{=}50$}
    
    \addplot[color=orange,
    mark=diamond, 
    ]
    table{data/escape50mloss.data};
    \addlegendentry{Escape $s{=}50$}

    \addplot[color=blue, 
    mark=pentagon, 
    ]
    table{data/raft100mloss.data};
    \addlegendentry{Raft $s{=}100$}
    
    \addplot[color=patriarch, 
    mark=pentagon, 
    ]
    table{data/z100mloss.data};
    \addlegendentry{Z-Raft $s{=}100$}
    
    \addplot[color=orange,
    mark=pentagon, 
    ]
    table{data/escape100mloss.data};
    \addlegendentry{Escape $s{=}100$}
    
    \end{axis}
    \end{tikzpicture}
    \end{adjustbox}
    
    \caption{Leader election comparisons under message loss. 
    When $\Delta{=}40\%$, in each broadcast, $40\%$ of the servers were not able to receive messages sent by a leader or candidates.}
    \label{fig:le_message_loss}
\end{figure}

\subsection{Election Time under Message Loss}
In addition to multiple phases with competing candidates, we evaluated the leader election performance under message loss, a scenario that frequently occurs in practice. \scap and Raft depend on a majority of servers participating in the consensus, so we chose to evaluate leader election under message loss rates~($\Delta$) of $0\%$, $20\%$, $30\%$, $40\%$. At each rate, a broadcast only reaches $1{-}\Delta$ servers. For example, in a cluster of $10$ servers and $\Delta{=}20\%$, a sender (leader or candidate) randomly omits two servers in each broadcast, whereas if $\Delta{=}0\%$, no message loss materializes.

Zookeeper~\cite{hunt2010zookeeper} implemented a leader election mechanism~\cite{zookeeperleaderelection} using unique server IDs to set priorities, which is similar to \scap's SCA method without PPF. However, if applying this approach to Raft, the priority assignment is not atomic. We applied Zookeeper's leader election approach in Raft and refer to it as Z-Raft. 
To compare the performance of \scap, Raft, and Z-Raft at various scales under varying rates of message loss, we chose to perform the experiment in clusters of $10$, $50$, and $100$ servers. 

Figure~\ref{fig:le_message_loss} shows the evaluation results averaged over $1000$ runs. Without message loss ($\Delta{=}0\%$), the election time of Raft is slightly higher than that of \scap and Z-Raft. However, message loss severely exacerbates split votes in Raft, especially at large scales. When $\Delta{=}40\%$ in the Raft cluster, $40\%$ of servers become unqualified candidates since they do not have up-to-date logs. Election campaigns triggered by unqualified candidates are in the same term and stranded in their election campaigns until a qualified candidate with a higher term wins the competition.

The results of Z-Raft and \scap exhibit similarity when the rate of message loss is relatively small. Since Z-Raft does not atomically arrange configurations, a low rate of message loss does not significantly jeopardize the effectiveness of fixed priorities. However, when message loss rates increase, without dynamically rearranging priorities, servers that are initially assigned with higher priorities are more likely to become stale servers whose logs are no longer up-to-date. This situation defeats the purpose of using priorities to rank candidates. Thus, when $\Delta$ increases, the gap of election times among Raft, Z-Raft, and \scap expands.

In contrast, \scap assigns the ``best'' configuration (the highest priority and shortest election timeout) to the most up-to-date server through SCA and PPF. With message loss in the network, if the most up-to-date server changes, the ``best'' configuration is dynamically rearranged to the qualified server. Because the configuration rearrangement is atomic, candidates are differentiated in real time based on log responsiveness through periodic heartbeats. When the leader crashes, the server firstly triggers an election campaign and completes the campaign in one phase (all the other servers will vote for this server), escaping from potential competitions.

The evaluation results show the quantitative comparison among three protocols. In a cluster size of $10$ servers, compared with Raft, when $\Delta{=}10\%$ and $\Delta{=}40\%$, Z-Raft reduces election time by $9.8\%$ and $14.3\%$, respectively, while \scap reduces election time by $9.6\%$ and $19\%$, respectively. The reduction in election time is more significant at large scales for \scap, especially at a high rate of message loss. In a cluster of $100$ servers, \scap reduces the leader election time by $21.4\%$ and $49.3\%$, when $\Delta{=}10\%$ and $\Delta{=}40\%$, respectively. 


\section{Related Work}
\label{sec:related-work}

Consensus algorithms coordinate servers to agree on proposed values in distributed systems. 
To efficiently achieve agreement among servers, numerous consensus algorithms use a designated leader to coordinate the process for committing values. 
These algorithms are called  leader-based consensus algorithms and contain election protocols to choose a unique server to play a particular role.

With an assumption of no server failures and an asynchronous network, the Chang and Roberts algorithm arranges an array of servers as a logical ring~\cite{changandroberts}, where a server communicates with its descendant in a clock-wise manner, and a single process that has the largest identifier is elected as the leader. In addition, the Bully algorithm~\cite{garcia1982elections} addresses fault tolerance by relying on communication among servers; it assumes a synchronous network and uses timeouts to detect server failures. The server with the highest server ID will be elected as the leader.

More recently, the celebrated Paxos algorithm~\cite{lamport1998part, lamport2001paxos} implicitly utilize a global leader for reaching consensus; every server that assumes a proposer role is allowed to broadcast proposals to be accepted. Nonetheless, the concept of leadership hides behind the consensus process that refuses values carrying a smaller proposal number~\cite{ongaro2014search}. To avoid conflicts and optimize performance, approaches using explicit leader roles have been developed~\cite{chandra2007paxos, burrows2006chubby, lamport2005generalized, lamport2006fast}. 
In addition, Viewstamped Replication (VR)~\cite{oki1988viewstamped, liskov2012viewstamped} provides an alternative way to achieve consensus using a leader-based approach. Although Raft shares similarities with Paxos and VR, Raft uses strong leadership where the messages flow only from the leader to other servers; all other servers passively synchronize states from the leader, resolving conflicts by obeying the leader's commands.
Utilizing strong leadership reduces the types of messages and thus improves Raft's understandability, making Raft widely deployed in practical large-scale distributed systems such as Baidu File System~\cite{baiduRaft}, SDN designs~\cite{sakic2017response}, and HyperLedger Kafka~\cite{androulaki2018hyperledger}. Thus, the leader election mechanism is crucial because a Raft system fails without a leader. However, Raft's leader election may result in competing election campaigns that needlessly prolong the election time. The performance evaluation of Raft in~\cite{huang2019performance} hinted at this problem, especially when the network splits and drops message. 


Although Zookeeper~\cite{hunt2010zookeeper} ushered a way to prioritize servers using server IDs as identifiers~\cite{zookeeperleaderelection}, this approach does not atomically rearrange priorities based on server log responsiveness. Consequently, the highest-priority server is guaranteed as the most up-to-date. In this case, highest-priority servers can be defeated in leader election campaigns and can not be elected as the new leader, especially under problematic networks (e.g., packet loss). In contrast, \scap does not only use server identifiers to assign priorities, but also periodically rearranges priorities and election timeouts in the cluster.
Through the proposed election mechanism, \scap prepares and manages a pool of prioritized candidates, resolving potential conflicts before the next election takes place. 
Furthermore, \scap shares Raft's safety and liveness properties and maintains understandability and modularity as well.

\section{Conclusions}
In this paper, we develop a new consensus protocol called \scap that enables fast leader election. \scap assigns stochastic configurations to prioritize servers. Higher-priority configurations are dynamically distributed to more up-to-date servers via the probing patrol function, forming a pool of prioritized candidates. 
The top candidate, with the highest priority configuration, in the pool is groomed as a ``future leader". Thus, leader election completes without suffering from split votes, taking precautions against leader failures before the next election takes place. 
\scap can also be adapted in other election protocols to resolve potential competition in advance. The evaluation results show improvements of leader election time compared with Raft and ZooKeeper under leader failures, especially under message loss.

\Urlmuskip=0mu plus 1mu\relax
\bibliographystyle{IEEEtran}
\bibliography{ref.bib}

\begin{thebibliography}{10}
\providecommand{\url}[1]{#1}
\csname url@samestyle\endcsname
\providecommand{\newblock}{\relax}
\providecommand{\bibinfo}[2]{#2}
\providecommand{\BIBentrySTDinterwordspacing}{\spaceskip=0pt\relax}
\providecommand{\BIBentryALTinterwordstretchfactor}{4}
\providecommand{\BIBentryALTinterwordspacing}{\spaceskip=\fontdimen2\font plus
\BIBentryALTinterwordstretchfactor\fontdimen3\font minus
  \fontdimen4\font\relax}
\providecommand{\BIBforeignlanguage}[2]{{%
\expandafter\ifx\csname l@#1\endcsname\relax
\typeout{** WARNING: IEEEtran.bst: No hyphenation pattern has been}%
\typeout{** loaded for the language `#1'. Using the pattern for}%
\typeout{** the default language instead.}%
\else
\language=\csname l@#1\endcsname
\fi
#2}}
\providecommand{\BIBdecl}{\relax}
\BIBdecl

\bibitem{shvachko2010hadoop}
K.~Shvachko, H.~Kuang, S.~Radia, R.~Chansler \emph{et~al.}, ``The hadoop
  distributed file system.'' in \emph{MSST}, vol.~10, 2010, pp. 1--10.

\bibitem{ousterhout2010case}
J.~Ousterhout, P.~Agrawal, D.~Erickson, C.~Kozyrakis, J.~Leverich,
  D.~Mazi{\`e}res, S.~Mitra, A.~Narayanan, G.~Parulkar, M.~Rosenblum
  \emph{et~al.}, ``The case for {RAMClouds}: scalable high-performance storage
  entirely in dram,'' \emph{ACM SIGOPS Operating Systems Review}, vol.~43,
  no.~4, pp. 92--105, 2010.

\bibitem{burrows2006chubby}
M.~Burrows, ``The {Chubby} lock service for loosely-coupled distributed
  systems,'' in \emph{Proceedings of the 7th symposium on Operating systems
  design and implementation}.\hskip 1em plus 0.5em minus 0.4em\relax USENIX
  Association, 2006, pp. 335--350.

\bibitem{hunt2010zookeeper}
P.~Hunt, M.~Konar, F.~P. Junqueira, and B.~Reed, ``Zookeeper: Wait-free
  coordination for internet-scale systems.'' in \emph{USENIX annual technical
  conference}, vol.~8, no.~9.\hskip 1em plus 0.5em minus 0.4em\relax Boston,
  MA, USA, 2010.

\bibitem{lamport1998part}
L.~Lamport, ``The part-time parliament,'' \emph{ACM Transactions on Computer
  Systems (TOCS)}, vol.~16, no.~2, pp. 133--169, 1998.

\bibitem{oki1988viewstamped}
B.~M. Oki and B.~H. Liskov, ``Viewstamped replication: A new primary copy
  method to support highly-available distributed systems,'' in
  \emph{Proceedings of the seventh annual ACM Symposium on Principles of
  distributed computing}.\hskip 1em plus 0.5em minus 0.4em\relax ACM, 1988, pp.
  8--17.

\bibitem{ongaro2014search}
D.~Ongaro and J.~Ousterhout, ``In search of an understandable consensus
  algorithm,'' in \emph{2014 USENIX Annual Technical Conference (USENIX ATC
  14)}, 2014, pp. 305--319.

\bibitem{sahoo2004failure}
R.~K. Sahoo, M.~S. Squillante, A.~Sivasubramaniam, and Y.~Zhang, ``Failure data
  analysis of a large-scale heterogeneous server environment,'' in
  \emph{International Conference on Dependable Systems and Networks,
  2004}.\hskip 1em plus 0.5em minus 0.4em\relax IEEE, 2004, pp. 772--781.

\bibitem{servercrash}
``Frequency of server failure based on the age of the server,''
  \url{https://www.statista.com/statistics/430769/annual-failure-rates-of-servers/}.

\bibitem{hardwarefails}
L.~StorageCraft~Technology, ``Which hardware fails the most and why,''
  \url{https://blog.storagecraft.com/hardware-failure/}, accessed: 2020.

\bibitem{bowers2011tell}
K.~D. Bowers, M.~Van~Dijk, A.~Juels, A.~Oprea, and R.~L. Rivest, ``How to tell
  if your cloud files are vulnerable to drive crashes,'' in \emph{Proceedings
  of the 18th ACM conference on Computer and communications security}, 2011,
  pp. 501--514.

\bibitem{gill2011understanding}
P.~Gill, N.~Jain, and N.~Nagappan, ``Understanding network failures in data
  centers: measurement, analysis, and implications,'' in \emph{Proceedings of
  the ACM SIGCOMM 2011 Conference}, 2011, pp. 350--361.

\bibitem{redisLeaderElection}
Redis, ``Redis cluster specification,''
  \url{https://redis.io/topics/cluster-spec}.

\bibitem{googleServicesOutages}
Wikipedia, ``2020 google services outages,''
  \url{https://en.wikipedia.org/wiki/2020\_Google\_services\_outages},
  accessed: 2021.

\bibitem{amazonServicesOutages}
Verge, ``Amazon web services outages,''
  \url{https://www.theverge.com/2020/11/25/21719396/amazon-web-services-aws-outage-down-internet}.

\bibitem{baiduRaft}
``Baidu file system,'' https://github.com/baidu/bfs, accessed: 2016-03-1.

\bibitem{sakic2017response}
E.~Sakic and W.~Kellerer, ``Response time and availability study of {RAFT}
  consensus in distributed {SDN} control plane,'' \emph{IEEE Transactions on
  Network and Service Management}, vol.~15, no.~1, pp. 304--318, 2017.

\bibitem{androulaki2018hyperledger}
E.~Androulaki, A.~Barger, V.~Bortnikov, C.~Cachin, K.~Christidis, A.~De~Caro,
  D.~Enyeart, C.~Ferris, G.~Laventman, Y.~Manevich \emph{et~al.}, ``Hyperledger
  fabric: a distributed operating system for permissioned blockchains,'' in
  \emph{Proceedings of the Thirteenth EuroSys Conference}.\hskip 1em plus 0.5em
  minus 0.4em\relax ACM, 2018, p.~30.

\bibitem{brown2016corda}
R.~G. Brown, J.~Carlyle, I.~Grigg, and M.~Hearn, ``Corda: an introduction,''
  \emph{R3 CEV, August}, vol.~1, p.~15, 2016.

\bibitem{zhang2021prosecutor}
G.~Zhang and H.-A. Jacobsen, ``Prosecutor: an efficient bft consensus algorithm
  with behavior-aware penalization against byzantine attacks,'' in
  \emph{Proceedings of the 22nd International Middleware Conference}, 2021, pp.
  52--63.

\bibitem{azurele}
M.~Azure, ``Azure leader election,''
  \url{https://docs.microsoft.com/en-us/azure/architecture/patterns/leader-election}.

\bibitem{huang2019performance}
D.~Huang, X.~Ma, and S.~Zhang, ``Performance analysis of the {Raft} consensus
  algorithm for private blockchains,'' \emph{IEEE Transactions on Systems, Man,
  and Cybernetics: Systems}, 2019.

\bibitem{zookeeperleaderelection}
``{Fast Leader Election of ZooKeeper},''
  \url{https://github.com/apache/zookeeper/blob/master/zookeeper-server/src/main/java/org/apache/zookeeper/server/quorum/FastLeaderElection.java}.

\bibitem{changandroberts}
\BIBentryALTinterwordspacing
E.~Chang and R.~Roberts, ``An improved algorithm for decentralized
  extrema-finding in circular configurations of processes,'' \emph{Commun.
  ACM}, vol.~22, no.~5, p. 281–283, May 1979. [Online]. Available:
  \url{https://doi.org/10.1145/359104.359108}
\BIBentrySTDinterwordspacing

\bibitem{garcia1982elections}
H.~Garcia-Molina, ``Elections in a distributed computing system,'' \emph{IEEE
  Transactions on Computers}, no.~1, pp. 48--59, 1982.

\bibitem{lamport2001paxos}
L.~Lamport, ``Paxos made simple,'' \emph{ACM Sigact News}, vol.~32, no.~4, pp.
  18--25, 2001.

\bibitem{chandra2007paxos}
T.~D. Chandra, R.~Griesemer, and J.~Redstone, ``Paxos made live: an engineering
  perspective,'' in \emph{Proceedings of the twenty-sixth annual ACM symposium
  on Principles of distributed computing}.\hskip 1em plus 0.5em minus
  0.4em\relax ACM, 2007, pp. 398--407.

\bibitem{lamport2005generalized}
L.~Lamport, ``Generalized consensus and paxos,'' \emph{Microsoft Research},
  2005.

\bibitem{lamport2006fast}
------, ``Fast paxos,'' \emph{Distributed Computing}, vol.~19, no.~2, pp.
  79--103, 2006.

\bibitem{liskov2012viewstamped}
B.~Liskov and J.~Cowling, ``Viewstamped replication revisited,'' 2012.

\end{thebibliography}
\end{document}